\begin{document}

\title{An instability criterion for nonlinear standing waves on nonzero backgrounds 
}


\author{R.K.\ Jackson  \and R.\ Marangell         \and
        H.\ Susanto}


\institute{
           R.K.\ Jackson  \at
              Hussman Strategic Advisors, \\ 
              5136 Dorsey Hall Drive, Ellicott City, MD 21042, USA\\
              \email{rjackson@hussmanfunds.com}
              \and
           R.\ Marangell \at
              Department of Mathematics and Statistics, \\The University of Sydney, Sydney 2006 NSW, Australia\\
              \email{robert.marangell@sydney.edu.au}           
              \and
           H.\ Susanto \at
              School of Mathematical Sciences, University of Nottingham, University Park,\\ Nottingham NG7 2RD, UK\\
              \email{hadi.susanto@nottingham.ac.uk}
}

\date{\today}

\maketitle

\begin{abstract}

A nonlinear Schr\"odinger equation with repulsive (defocusing) nonlinearity is considered. As an example, a system with a spatially varying coefficient of the nonlinear term is studied. The nonlinearity is chosen to be repelling except on a finite interval. Localized standing wave solutions on a non-zero background, \emph{e.g.}, dark solitons trapped by the inhomogeneity, are identified and studied. A novel instability criterion for such states is established through a topological argument. This allows instability to be determined quickly in many cases by considering simple geometric properties of the standing waves as viewed in the composite phase plane. Numerical calculations accompany the analytical results.

\keywords{Localized solutions \and solitons \and Schr\"odinger equations \and phase portraits }
\end{abstract}

\section{Introduction}
\label{intro}

Dark solitons provide an important example of solitary waves on a nonzero background.  A typical dark soliton is a one-dimensional localized wavepacket consisting of a dip in an ambient nonzero density, where the carrier plane wave has a phase shift along the spatial variable. Such a solution is supported in nonlinear Schr\"odinger equations with repulsive or defocusing type nonlinearity, which admit stable nonzero backgrounds, describing localized excitations of, e.g., wavefunctions in nonlinear optics \cite{kivs98} and Bose-Einstein condensates \cite{kevr07}.

The dynamics of dark solitons in the nonlinear Schr\"odinger equation can be understood completely since the equation is exactly integrable via the inverse scattering technique \cite{zakh73}. When integrability is lost, an analysis of the existence and stability of dark solitons becomes more involved. A criterion relating the soliton linear (in)stability and the slope of the corresponding normalized momentum-velocity curve was derived in \cite{kivs95,bara96}, similar to the popular Vakhitov-Kolokolov condition for bright solitons \cite{vakh73,kolo73}. The method was later rigorously justified in \cite{lin02,peli08}. A Vakhitov-Kolokolov function whose sign determines the stability of the corresponding dark soliton was defined in \cite{menz07}, complementing the result of \cite{lin02} in the case of zero velocity. The Cauchy problem for the defocusing Schr\"odinger equation was shown to be locally well-posed and the stability of dark solitons was justified in \cite{zhid92}.

Recent successful experiments on the generations of spatial dark solitons in nonlinear optics \cite{swar91,mora01} and particularly Bose-Einstein condensates \cite{burg99,dens00} (see also review \cite{fran10}) have initiated an intense theoretical study on the stability of dark solitons in the presence of various experimentally-relevant external potentials, which may destabilize the solitons. The (in)stability has been analyzed using a variety of analytical methods, such as asymptotic analysis \cite{peli96,lega97}, Evans function techniques \cite{kapi00,peli08}, particle-like approximations \cite{kivs94,kivs95_b}, perturbation methods based on Jost functions \cite{kono94,chen98,huan99,ao05} and the inverse scattering transform method \cite{lash04}.


Here, we introduce a novel topological method to recognize instabilities of solitons on nonzero backgrounds caused by purely real eigenvalues. The result enables us to study the instability of a dark soliton through a phase portrait in the phase-plane. The geometric analysis complements previous methods developed for bright solitons \cite{jone88} (see also \cite{mara11} and references therein) as well as the eigenvalue counts for dark solitons derived in \cite{peli08}. As an illustrative example, we consider solitary waves of a Schr\"odinger equation whose nonlinearity is a combination of self-focusing and self-defocusing, with the self-focusing region localized in a finite interval. Such a structure was first proposed and studied in \cite{chen92}. By introducing a non-repulsive inhomogeneity, it was shown that stable static gray solitons not present in the homogeneous case, which we refer to as the ``dips'' herein, are possible \cite{tran94}. Recently, such ``dips'' in motion were shown in \cite{abdu12} to have a critical velocity at which they disappear in a saddle-node bifurcation (see also \cite{haki97,mari03} for the existence and stability of moving ``dips'' due to linear inhomogeneity). 

In the experiments of Bose-Einstein condensates, Feshbach resonances that can control the sign and magnitude of the scattering length \cite{kohl06} were recently used to induce spatial inhomogeneities in the nonlinearity coefficient \cite{yama10}. The reader is also referred to a recent comprehensive survey of results for complex nonlinear wave patterns supported by nonlinear lattices and their combinations with linear lattices as well as relevant experimental settings in \cite{kart11}. Nevertheless, rather than considering infinitely long and periodic (linear and nonlinear) lattices, we study localised inhomogeneities tantamount to finite optical lattices \cite{caru02}.

This paper is outlined as follows. In Section 2 we discuss the mathematical model and describe its standing wave solutions.  In Section 3, we consider the corresponding linear eigenvalue problem, developing and applying an instability criterion for these standing waves. In Section 4, we compare these theoretical results with numerical computations. In Section 5, we consider the existence and stability of black solitons. In the section, we demonstrate that the result cannot be used to detect instability caused by complex eigenvalues. A few concluding remarks are included in Section 6.

\section{Mathematical model and standing waves}

We consider the following governing differential equation
\begin{equation}
\mathrm{i}\Psi_{t}+\Psi_{xx}+f(|\Psi|^2,x)\Psi=0,
\label{gov1}
\end{equation}
where
$$
f(|\Psi|^2,x) = \left\{\begin{array}{ll}
|\Psi|^2-V \qquad & \textrm{if $|x|<L$} \\
-|\Psi|^2 & \textrm{if $|x|>L$}
\end{array} \right. .
$$
In particular, this models a defocusing-type nonlinearity in the `outer' region and a focusing-type nonlinearity in the `inner' region. $L$ and $V$ are positive real-valued parameters modeling the inner system; $L$ is half the width of the inhomogeneity and $V$ is its depth. In the context of nonlinear optics or Bose-Einstein condensates, $V$ describes linear inhomogeneities due to variation in the refractive index or linear potential, respectively \cite{kart11}.

To study standing waves of (\ref{gov1}), we pass to a rotating
frame and consider solutions of the form $\Psi(x,t) = \psi(x,t) \ \mathrm{e}^{-\mathrm{i} r^2 t}$. We then have
\begin{equation}
\mathrm{i}\psi_{t}+\psi_{xx}+f(|\psi|^2,x)\psi=-r^2 \psi.
\label{gov2}
\end{equation}
Standing wave solutions of (\ref{gov1}) will be steady-state solutions of (\ref{gov2}). We consider real, $t$-independent solutions $u(x)$ of the ODE
\begin{equation}
u_{xx} = -(r^2+f(u^2,x))u.
\label{stat1}
\end{equation}
In the search for standing waves, we rewrite (\ref{stat1}) as a system
\begin{equation}
\begin{array}{rcl}
u_{x} & = & v \\
v_{x} & = & -(r^2+f(u^2,x))u
\end{array}
\label{stat2}
\end{equation}
and look for solutions in the phase plane that decay to a non-zero steady state as $x \rightarrow \pm \infty$. For each standing wave solution $u(x)$, we introduce the complementary power of the solution,
\[
N=\int_{-\infty}^{\infty}(u(x)^2-r^2)\,\mathrm{d}x.
\]
This value is not required in our analysis, but is a conserved quantity that we use to help depict the bifurcation diagrams in Figures \ref{fig:bif} and \ref{fig:truedark}.

\subsection{The outer system}
When $|x|>L$, system (\ref{stat2}) has the form
\begin{equation}
\begin{array}{rcl}
u_{x} & = & v \\
v_{x} & = & -r^2 u + u^3.
\end{array}
\label{stat2a}
\end{equation}
If $r^2 > 0$, the outer system (\ref{stat2a}) has three equilibria: $(-r, 0)$, $(0,0)$ and $(+r, 0)$.  $(0,0)$ is an elliptic fixed point. The twin equilibria $(\pm r, 0)$ are hyperbolic fixed points with identical linearizations; each has a one-dimensional local unstable manifold and a one-dimensional local stable manifold.

Further, note that this outer system (\ref{stat2a}) has Hamiltonian
$$
H_\textrm{out}(u,v)=\frac{1}{2}v^2+\frac{r^2}{2}u^2-\frac{1}{4}u^4.
$$
Solutions that approach the equilibria $(\pm r, 0)$ as $x \rightarrow \pm \infty$, must do so along the level set of the $H_\textrm{out}$ containing $(\pm r, 0)$, \textit{i.e.}, $H_\textrm{out} = r^4/4$.  In the $uv$-phase plane, this set consists of the parabolas
$$
v = \frac{\pm 1}{\sqrt{2}}\left(u^2-r^2\right).
$$
See the (red online) curves in Figure \ref{fig:pplane}.

\begin{figure}[tbhp!]
\begin{center}
\includegraphics[width=3.5in]{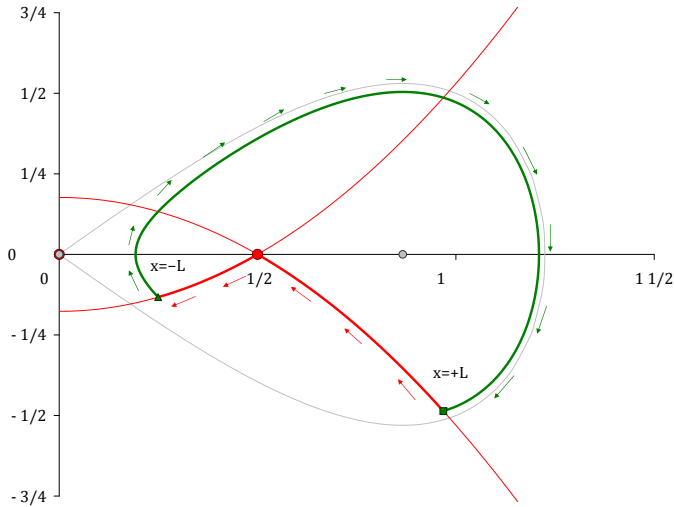}
\end{center}
\caption{A standing wave in the $uv$-plane with parameters $L=2.75$, $V=1$ and $r^2=0.25$.  For $x<-L$, solutions satisfy (\ref{stat2a}), and so any profile asymptotic to $(r,0)$ as $x \rightarrow -\infty$ must start along the unstable manifold of $(r,0)$, that is, the upward parabola.  At $x=-L$, the governing system switches to (\ref{stat2b}), and the solution leaves this parabola.  In order for the solution to also approach $(r,0)$ as $x \rightarrow +\infty$, the transient must reach a point on the stable manifold of $(r,0)$ (the downward parabola) at exactly $x=+L$, when the governing system switches back to (\ref{stat2a}).}
\label{fig:pplane}
\end{figure}

\subsection{The inner system}
When $|x|<L$, system (\ref{stat2}) takes the form
\begin{equation}
\begin{array}{rcl}
u_{x} & = & v \\
v_{x} & = & \left(V-r^2\right) u - u^3.
\end{array}
\label{stat2b}
\end{equation}
If $0<r^2<V$, this system also has three equilibrium points: $(-\sqrt{V-r^2},0)$, $(0,0)$ and $(+\sqrt{V-r^2},0)$. In this parameter regime, $(0,0)$ is a hyperbolic equilibrium and $(\pm \sqrt{V-r^2},0)$ are elliptic equilibria.  As $r^2$ is increased through $V$, these three equilibria collapse into one. And for $r^2>V$, there is a single elliptic equilibrium point at $(0,0)$.

In this inner system, trajectories travel along level curves of the Hamiltonian
$$
H_\textrm{in}(u,v)=\frac{1}{2}v^2+\frac{r^2-V}{2}u^2+\frac{1}{4}u^4.
$$
Portions of these level curves are depicted in green and gray in Figure \ref{fig:pplane} and elsewhere.

\subsection{Shooting for standing waves}

Standing waves can now be identified using a \emph{shooting argument} in the $uv$-phase plane.  For any standing wave that decays to $(r,0)$ as $x \rightarrow \pm \infty$, the trajectory $(u(x),v(x))$ must lie on the unstable manifold of $(r,0)$ in the outer system, that is, on the upward parabola for $-\infty < x < -L$.  At the other end, $(u(x),v(x))$ must be on the stable manifold of $(r,0)$ in the outer system, that is, on the downward parabola for $L < x < \infty$.  So a standing wave corresponds to a connection, evolving according to (\ref{stat2b}) for time $2L$, between a point $(u(-L),v(-L))$ on the upward parabola and a point $(u(L),v(L))$ on the downward parabola, see Figure \ref{fig:pplane}.

\begin{figure}[tb!]
\begin{center}
\includegraphics[width=3.5in]{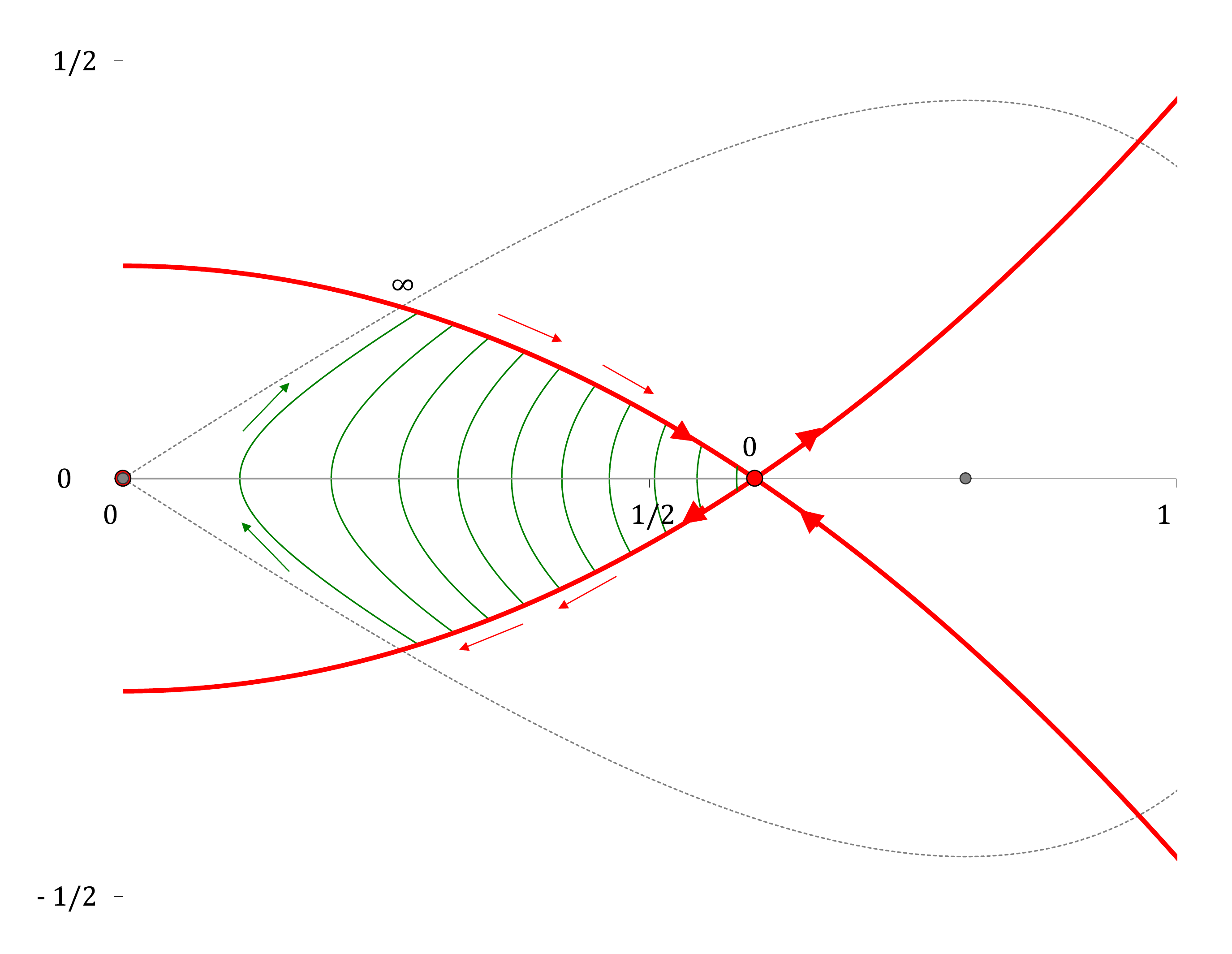}
\includegraphics[width=3.5in]{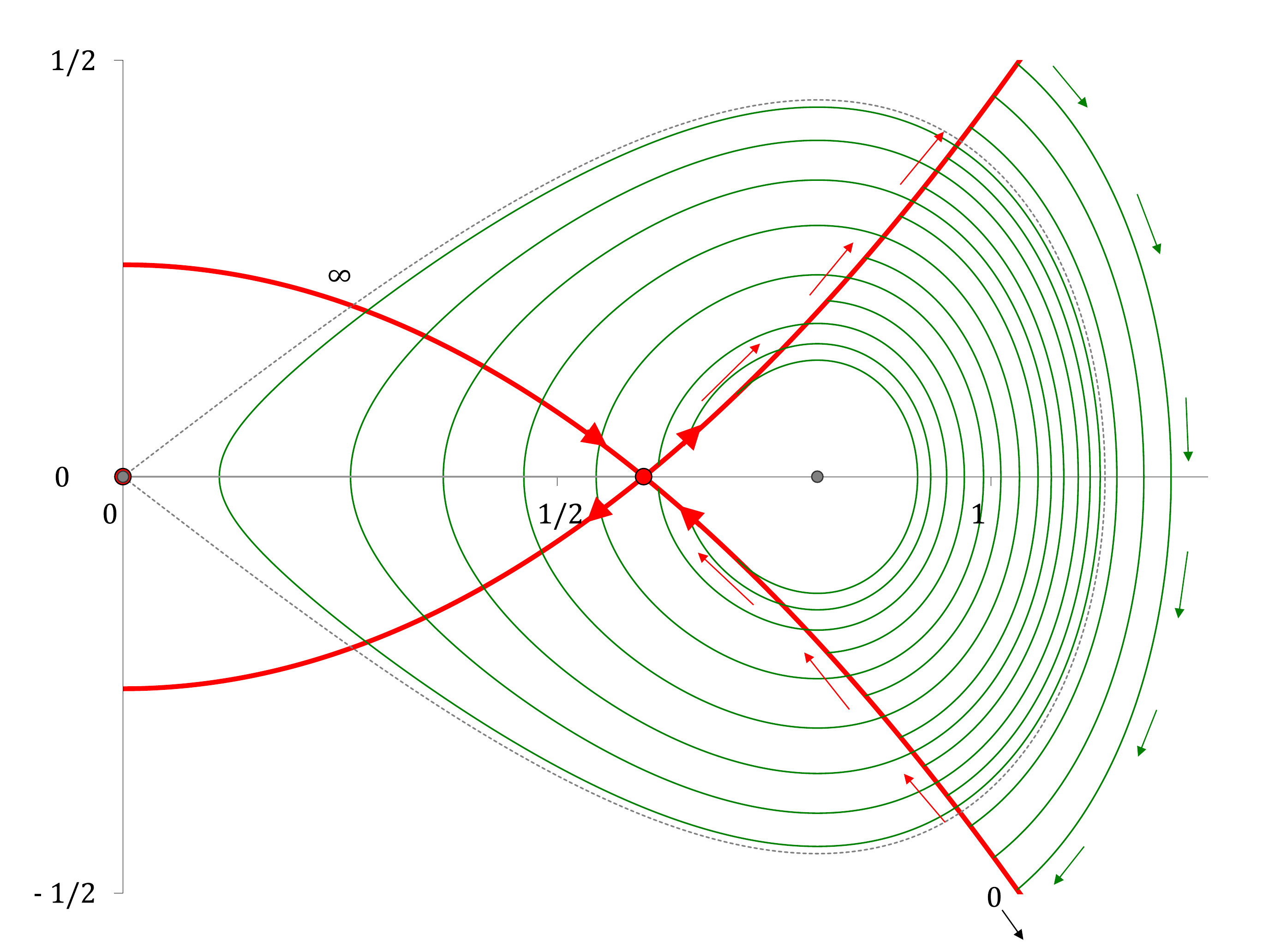}
\end{center}
\caption{(Top) For every $L$, there is a ``dip'' standing wave when $0 < r^2 < V/2$. The ``$0$'' and ``$\infty$'' along the downward parabola indicate the switching points where the time $2L$ is $0$ and $\infty$, respectively. (Bottom) For every $L$, there is also a ``bubble'' standing wave when $0 < r^2 < V/2$. In this case, the time $2L$ decreases when the switching point $u_0$ increases.}
\label{fig:dip}
\end{figure}

\begin{theorem}  \label{th:exist} Consider any positive system parameters $V$ and $L$.  Suppose the propagation parameter $r^2$ satisfies $0 < r^2 < V/2$.  Then there are at least \textbf{two} positive-valued symmetric standing waves, which we call a ``dip'' and a ``bubble''.  Additionally, if $L$ is sufficiently large, there is also a pair of positive-valued asymmetric standing waves.
\end{theorem}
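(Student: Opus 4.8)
The plan is to run a shooting argument in the composite phase plane of Figure \ref{fig:pplane} and to exploit the reversibility of the governing systems. Both (\ref{stat2a}) and (\ref{stat2b}) are invariant under the involution $R:(u,v)\mapsto(u,-v)$ combined with $x\mapsto -x$, and $R$ interchanges the upward and downward parabolas $v=\pm\frac{1}{\sqrt2}(u^2-r^2)$, which are the unstable and stable manifolds of the background state $(r,0)$. Since the switching interval $[-L,L]$ is symmetric about $x=0$, a symmetric standing wave is precisely a trajectory of the inner system (\ref{stat2b}) that meets $v=0$ at $x=0$ and reaches the downward parabola at $x=+L$: the reflected arc for $x<0$ then automatically lands on the upward parabola at $x=-L$, completing the connection. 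Hence symmetric waves are in bijection with turning points $(u_0,0)$ of inner orbits whose forward flow-time to the first crossing of the downward parabola equals $L$, and I would reduce the whole symmetric problem to solving $T(u_0)=L$ for a suitable flow-time function $T$.

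First I would treat the ``dip.'' The hypothesis $0<r^2<V/2$ forces $\sqrt{V-r^2}>r$, so the background point $(r,0)$ lies strictly inside the right lobe of the figure-eight separatrix of (\ref{stat2b}), to the left of the inner center $(\sqrt{V-r^2},0)$; a direct check of the sign of $v_x$ at $(r,0)$ shows it is the left turning point of its own inner orbit. I would parametrize candidate dips by the left turning point $u_0\in(0,r)$ and let $T(u_0)$ be the flow-time to the first crossing of the upper branch of the downward parabola. The two limits marked in Figure \ref{fig:dip} are then established as follows: as $u_0\to r^-$ the starting point approaches the parabola intersection $(r,0)$, so $T(u_0)\to 0$; as $u_0\to 0^+$ the orbit limits onto the homoclinic separatrix through the saddle $(0,0)$ and the passage time near the saddle diverges, so $T(u_0)\to\infty$. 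Continuity of $T$ together with the intermediate value theorem then gives, for every $L>0$, at least one dip. The ``bubble'' is obtained by the same scheme applied to the family of right-turning-point orbits with switching point $u_0$ on the lower branch of the downward parabola, along which the flow-time varies monotonically (cf.\ Figure \ref{fig:dip}) and sweeps the range needed to solve $T=L$. Positivity is automatic because every orbit in the right lobe stays in $u>0$.

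For the asymmetric pair I would pass to the full, non-reduced shooting map. Parametrize points $p(a)$ of the upward parabola by a coordinate $a$, flow them by (\ref{stat2b}) for the fixed time $2L$, and let $G_L(a)$ be a signed distance of $\Phi_{2L}(p(a))$ from the downward parabola; standing waves are the zeros of $G_L$, the symmetric ones being those whose $v=0$ crossing occurs at the temporal midpoint. The geometric input is that as $L$ grows the image curve $\Phi_{2L}(\{p(a)\})$ winds farther around the inner center, so $G_L$ acquires additional sign changes and hence additional zeros beyond the two symmetric ones. I would show that for $L$ sufficiently large at least one such extra zero exists, corresponds to a positive trajectory confined to the right lobe, and is genuinely asymmetric (its $v=0$ crossing is off the midpoint); the reflection $u(x)\mapsto u(-x)$ then sends it to a second, distinct wave, giving the pair.

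The main obstacle is exactly this last step. The symmetric existence is a clean one-dimensional intermediate-value argument once the two flow-time limits are pinned down, whereas the asymmetric waves demand global control of the shooting map: one must quantify how much winding an increase of $L$ produces (equivalently, control the periods of the inner orbits crossed by the flowed parabola), ensure the new intersections with the downward parabola are transverse and lie in $u>0$, and verify that they break the $x\mapsto -x$ symmetry rather than merely reproducing the dip or bubble. Making ``sufficiently large $L$'' quantitative --- via either a winding or degree count for $\Phi_{2L}$ restricted to the right lobe, or a gluing argument near the $L=\infty$ homoclinic limit --- is the delicate part of the proof.
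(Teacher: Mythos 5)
Your symmetric existence argument is essentially the paper's: a shooting/IVT argument on a transit-time function for the inner system (\ref{stat2b}), with the two limits $T\to 0$ (starting data collapsing onto the parabola vertex $(r,0)$, using $v_x(r,0)>0$ when $r^2<V/2$) and $T\to\infty$ (orbit limiting onto the separatrix through the saddle $(0,0)$). Your reparametrization by the turning point on the $u$-axis via reversibility is equivalent to the paper's parametrization by the launch point $(u_0,(u_0^2-r^2)/\sqrt 2)$ on the upward parabola, since the midpoint of a symmetric trajectory is a turning point. One minor weakness: for the bubble you assert that the flow time ``varies monotonically \dots and sweeps the range needed,'' citing a figure. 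Monotonicity is neither needed nor proved; what you actually need (and what the paper supplies) are the two limits: $T\to 0$ as $u_0\to\infty$ because the cubic nonlinearity dominates, and $T\to\infty$ as the launch point approaches the stable manifold of $(0,0)$, i.e.\ as $u_0^2\downarrow (V-\sqrt{V^2-2r^4})/2$.

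The genuine gap is in the asymmetric pair. You reduce it to showing that the curve $\Phi_{2L}$ of the upward parabola acquires extra transverse, symmetry-breaking intersections with the downward parabola as $L$ grows, and you yourself flag that quantifying the winding and verifying asymmetry is ``the delicate part'' --- in other words, the step is not carried out, and a degree or winding count alone would not distinguish a new asymmetric zero from an additional symmetric one. The paper avoids this entirely with an explicit observation: restricted to the downward parabola, $H_{\textrm{in}}$ is a function of $s=u^2$ symmetric under $s\mapsto V-s$, so the inner orbit launched from $(u_0,(u_0^2-r^2)/\sqrt 2)$ with $u_0^2<V/2$ meets the downward parabola not only at the reflected point $(u_0,-(u_0^2-r^2)/\sqrt2)$ but also, earlier, at the distinct point $(\sqrt{V-u_0^2},-(V-u_0^2-r^2)/\sqrt2)$. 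Terminating the shot at that first crossing manifestly yields an asymmetric connection ($u(-L)=u_0\neq u(+L)=\sqrt{V-u_0^2}$), and IVT on the transit time to that crossing --- which tends to $\infty$ at the separatrix and to $2L_s$ as $u_0\to\sqrt{V/2}$, where the two intersection points merge at a tangency of the orbit with the parabola --- gives an asymmetric solution for every $L>L_s$, with its twin obtained by reflection. This pins down ``sufficiently large $L$'' concretely as $L>L_s$ and simultaneously locates the pitchfork bifurcation at $u_0=\sqrt{V/2}$, which your winding scheme would still have to extract.
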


\begin{proof}

We approach this proof a little sideways, first fixing a positive parameter $V$ and a propagation parameter $r^2$ in the interval $0<r^2<V$.  As noted above, a standing wave corresponds to a trajectory, evolving according to (\ref{stat2b}), linking the upward parabola $v = (u^2-r^2)/\sqrt{2}$ with the downward parabola $v = -(u^2-r^2)/\sqrt{2}$ in the $uv$-plane.  Instead of treating $L$ as fixed, we tune $L$ continuously, as a function of $u_0$, so that if $(u(-L),v(-L))= (u_0, (u_0^2-r^2)/\sqrt{2})$, that is, $(u(-L),v(-L))$ is on the upper (starting) parabola, then $(u(+L),v(+L))$ is on the downward (target) parabola.

First we identify the symmetric ``dip'' solution.  As starting values for our shot across the inner system, we choose $u_0$ satisfying $(V-\sqrt{V^2-2r^4})/2 < u_0^2 < r^2$.   Geometrically, this picks out potential starting points between the stable manifold of $(0,0)$ in the inner system and the equilibrium point $(r,0)$ in the outer system, see Figure \ref{fig:dip}.  For each $u_0$ in this interval, consider the time ($2L$) that it takes a trajectory, starting at $(u_0,(u_0^2-r^2)/\sqrt{2})$ on the upward parabola and evolving according to (\ref{stat2b}), to first reach the reflection $(u_0,-(u_0^2-r^2)/\sqrt{2})$.  Because $r^2 < V/2$, the inner system has $v_x(r,0) > 0$ and the transit time $2L$ between $(u_0,(u_0^2-r^2)/\sqrt{2})$ and $(u_0,-(u_0^2-r^2)/\sqrt{2})$ approaches zero as $u_0 \rightarrow r$ from the left.  At the other end, as the starting point $(u_0,(u_0^2-r^2)/\sqrt{2})$ approaches the stable manifold of $(0,0)$ in the inner system, and the connecting trajectory enters a neighborhood of the equilibrium $(0,0)$ of the inner system, the transit time approaches infinity.  Since the transit time depends continuously on the starting value $u_0$ in this interval, the Intermediate Value Theorem guarantees at least one ``dip'' solution for every positive value of $L$.

Second, we identify a symmetric ``bubble'' solution.  Here we consider starting values $u_0$ in the infinite half-interval $(V-\sqrt{V^2-2r^4})/2 < u_0^2$, see bottom panel of Figure \ref{fig:dip}.  As $u_0$ increases, the nonlinearity dominates and the time, $2L$, it takes for the trajectory starting at $(u_0,(u_0^2-r^2)/\sqrt{2})$ to first reach its reflection $(u_0,-(u_0^2-r^2)/\sqrt{2})$ approaches zero.  As we continue this trajectory by decreasing $u_0$, we near $u_0^2 = (V-\sqrt{V^2-2r^4})/2$ and again the transit time again approaches $\infty$.  Note that to deform this family of trajectories smoothly through $u_0=r$, we no longer cut off the trajectory at the time of ``first return'', but rather travel once around the loop before stopping. The Intermediate Value Theorem guarantees at least one ``bubble'' solution for every positive value of $L$.

\begin{figure}[tbhp!]
\begin{center}
{\includegraphics[width=3.5in]{}}
\end{center}
\caption{
The 
threshold value $L_s$ for the appearance of an asymmetric solution as a function of $r^2$. Looked at another way, for each $L$, this graph identifies the $r^2-$location of the pitchfork bifurcation. 
}
\label{fig:L}
\end{figure}

Finally, for larger values of $L$, we identify a pair of asymmetric solutions.  Note especially that when $u_0 < \sqrt{V/2}$, the bubble solution identified above, starting at $(u_0,(u_0^2-r^2)/\sqrt{2})$, actually passes through the downward (target) parabola at a point $(\sqrt{V-u_0^2},-(V-u_0^2-r^2)/\sqrt{2})$ before terminating at the reflection $(u_0,-(u_0^2-r^2)/\sqrt{2})$.  So if $u_0$ satisfies $(V-\sqrt{V^2-2r^4})/2 < u_0^2 < V^2/2$, we can consider the trajectory of (\ref{stat2b}) that originates at $(u_0,(u_0^2-r^2)/\sqrt{2})$ and terminates when it first reaches $(\sqrt{V-u_0^2},-(V-u_0^2-r^2)/\sqrt{2})$.  Once again, as $u_0$ shrinks toward $\sqrt{(V-\sqrt{V^2-2r^4})/2}$, the transit time $2L$ approaches infinity.  On the other hand, as $u_0$ grows toward $\sqrt{V/2}$, the transit time approaches the transit time of the symmetric bubble that launches from $u_0 = \sqrt{V/2}$.  If this distinguished symmetric bubble has transit time $2L_s$, then we can guarantee an asymmetric bubble solution for all $L>L_s$. The threshold value $L_s$ varies with the propagation parameter $r^2$, as shown in Figure \ref{fig:L}.  For each $L>L_s$, the second (twin) asymmetric solution is a reflected version of the first, satisfying $u(-L)=\sqrt{V-u_0^2}$ and $u(+L)=u_0$. 
\qed
\end{proof}

Although the theorem above guarantees the existence of a number of standing waves when $r^2 < V/2$, it is an interesting study to consider the bifurcations that create and destroy these standing waves as parameters change.  In particular, when $r^2 > V/2$, the picture changes significantly.  For small values of $L$, the pulse that we have identified as the ``dip'' actually has the form of a small bubble ($u(x)>r$ for all $x$).  And for these small values of $L$, the ``bubble'' and the ``dip'' eventually meet and annihilate each other in a saddle-node bifurcation, see Figure \ref{fig:saddle_node}.

\begin{figure}[tbhp!]
\begin{center}
\includegraphics[width=3.5in]{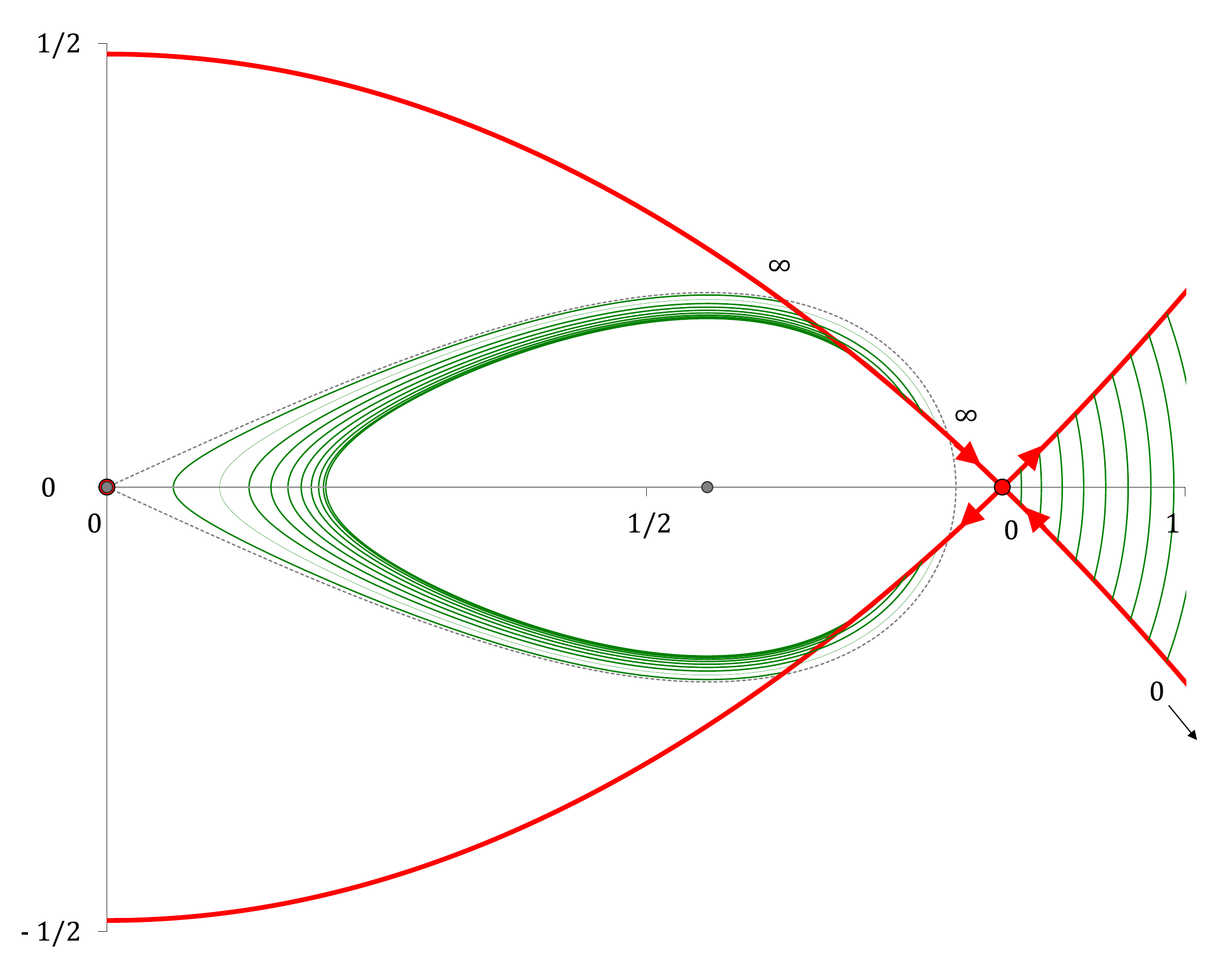}
\end{center}
\caption{For $r^2 > V/2$, pairs of waves are destroyed in saddle-node bifurcations.}
\label{fig:saddle_node}
\end{figure}

For larger values of $L$, there may be more than one bubble-type solution present.  Nonetheless, the ``dip'' and one of the bubbles similarly annihilate each other in a saddle-node bifurcation.  As $L \rightarrow \infty$, the bifurcation value for this saddle-node approaches $r^2=\sqrt{V/2}$.  Exactly which bubble meets to annihilate the dip changes with $L$. 

Asymmetric pulses are formed when a symmetric ``bubble'' undergoes a subcritical pitchfork bifurcation.  This occurs as the launch parameter falls through $u_0=\sqrt{V/2}$ with changing $r^2$.  At this value of $u_0$, the ``inner'' trajectory is tangent to the parabolas.

\section{Instability of standing waves}

To establish the instability of a standing wave solution, we linearize (\ref{gov2}) about a solution of (\ref{stat1}). Writing $\psi=u(x) + \epsilon\left((p(x)+\mathrm{i}q(x))\mathrm{e}^{\lambda t} + (p(x)^\star+\mathrm{i}q(x)^\star) \mathrm{e}^{\lambda^\star t}\right)$ and retaining terms linear in $\epsilon$ leads to the eigenvalue problem
\begin{equation}
\lambda\left(\begin{array}{cc} p \\ q \end{array}\right) = \left(\begin{array}{cc} 0 & D_{-} \\ - D_{+} & 0 \end{array}\right) \left(\begin{array}{cc} p \\ q \end{array}\right)  =: M \left(\begin{array}{cc} p \\ q \end{array}\right),
\label{eq:linear}
\end{equation}
where the linear operators $D_{+}$ and $D_{-}$ are defined as
\begin{eqnarray}
D_{+} & = & \frac{\partial^2}{\partial x^2} +r^2 + f(u^2,x)+ 2u^2D_1f(u^2,x),
\label{eq:plusoperator} \\
D_{-} & = & \frac{\partial^2}{\partial x^2} +r^2 + f(u^2,x).
\label{eq:minusoperator}
\end{eqnarray}
It is then clear that the presence of an eigenvalue of $M$ with positive real part implies instability.

\subsection{An instability criterion}

We will concern ourselves primarily with real eigenvalues, and we introduce a simple lower bound for the number of positive real eigenvalues of the operator $M$ defined in (\ref{eq:linear}). In \cite{rmckrtjhs10}, systems like (\ref{gov1}) were considered. One can show that the following quantities are well defined (see for example \cite{jone88}, and the references therein):
\begin{eqnarray*}
P &=& \textrm{ the number of positive eigenvalues of } D_{+} \\
Q &=& \textrm{ the number of positive eigenvalues of } D_{-}.
\end{eqnarray*}
From Sturm-Liouville theory, $P$ and $Q$ can be determined by considering nontrivial solutions of $D_+ v = 0$ and $D_- v = 0$, respectively. In fact, they are the number of zeros of the associated solution $v$. Notice that $D_- v = 0$ is actually satisfied by the standing wave itself, and that $D_+ v = 0$ is the equation of variations of the standing wave equation. It follows that:
\begin{equation} \label{eq:pandq}
\begin{array}{lll}
&&Q = \textrm{ the number of zeros of the standing wave $u$.}  \\
&&P = \textrm{ the number of zeros of a nontrivial solution to the variational equation along $u$. }
\end{array}
\end{equation}
Suppose that we are away from a bifurcation point, {so that $D_+ v = 0$ does not have a nontrivial bounded solution}.  Then using the definition of $P$ and $Q$, the main result of this paper is formulated in the following lemma.
\begin{lemma}\label{th:main}
If $P-Q \neq 0,$ then there is a real positive eigenvalue of the operator $M$.
\end{lemma}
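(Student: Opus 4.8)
The plan is to detect real positive eigenvalues of $M$ as zero-crossings of an auxiliary \emph{self-adjoint} operator, and then count the crossings topologically. The starting observation is that $(p,q)^\top$ is an eigenvector of $M$ with real eigenvalue $\lambda$ exactly when $D_{-}q=\lambda p$ and $-D_{+}p=\lambda q$, which is precisely the statement that $(p,q)^\top$ lies in the kernel of the self-adjoint operator
\[
\mathcal{A}(\lambda):=\left(\begin{array}{cc} -D_{+} & -\lambda \\ -\lambda & D_{-} \end{array}\right),
\]
in which $\lambda$ now appears as a real \emph{parameter} rather than a spectral value. Thus a real positive eigenvalue of $M$ exists if and only if $0\in\mathrm{spec}\,\mathcal{A}(\lambda)$ for some $\lambda>0$. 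Since $\{\mathcal{A}(\lambda)\}_{\lambda\ge 0}$ is an analytic family of self-adjoint operators, its eigenvalues move continuously with $\lambda$, and I would track how they cross the level $0$ as $\lambda$ runs from $0$ to $+\infty$.

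A short computation with the asymptotic constant-coefficient symbol shows that for every $\lambda>0$ the value $0$ sits inside a spectral gap of $\mathcal{A}(\lambda)$, so eigenvalues near $0$ are isolated and the signed number of crossings of $0$ — the spectral flow of $\lambda\mapsto\mathcal{A}(\lambda)$ — is well defined. At the upper end $\lambda\to\infty$ the off-diagonal coupling dominates and drives every eigenvalue away from $0$, so that endpoint contributes nothing. At the lower end the operator decouples, $\mathcal{A}(0)=\mathrm{diag}(-D_{+},D_{-})$, and here the Sturm--Liouville data enters: by (\ref{eq:pandq}) the relevant finite spectral information is exactly $P$ (the positive eigenvalues of $D_{+}$, i.e. the zeros of the variational solution) and $Q$ (the positive eigenvalues of $D_{-}$, i.e. the zeros of the standing wave $u$). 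The hypothesis that we are away from a bifurcation point — $D_{+}v=0$ has no nontrivial bounded solution — guarantees $0\notin\mathrm{spec}_{pt}D_{+}$, so $\lambda=0$ is not itself a spurious crossing coming from the $p$-block.

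The heart of the argument is to evaluate the net flow. Turning on $\lambda$ makes the $P$ positive eigenvalues of $D_{+}$ and the $Q$ positive eigenvalues of $D_{-}$ migrate toward the gap from opposite sides, and tracking their passage through $0$ I expect the net signed number of crossings to equal $P-Q$. Because the spectral flow is an integer that can vanish only if the contributions balance, a nonzero value $P-Q$ forces at least one genuine crossing, i.e. a value $\lambda>0$ with $0\in\ker\mathcal{A}(\lambda)$, which is the desired real positive eigenvalue of $M$; note that this catches \emph{every} nonzero $P-Q$, even and odd alike, rather than only a parity. Equivalently one can run the whole argument through the real-analytic Evans function on $[0,\infty)$ as in \cite{jone88}, comparing its sign at $\lambda=0$ and $\lambda=+\infty$ by Sturm oscillation; the spectral-flow formulation is just the self-adjoint repackaging of that comparison.

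The hard part will be the endpoint bookkeeping at $\lambda=0$. Two features make it delicate. First, the background is only neutrally stable, so $0$ is the edge of the essential spectrum of both $D_{-}$ and $\mathcal{A}(0)$ — indeed the standing wave $u$ is a bounded, non-decaying threshold solution of $D_{-}q=0$. Second, because the blocks decouple at $\lambda=0$, the positive eigenvalues of $D_{+}$ and $D_{-}$ sit \emph{embedded} in the essential spectrum of the opposite block, so one must control how these embedded eigenvalues emerge as true eigenvalues once $\lambda\ne 0$ and verify that they cross $0$ with the orientation producing the clean total $P-Q$. Making this step rigorous is where I would lean on the oscillation-theoretic and Evans-function analysis of \cite{jone88,peli08}; the remaining ingredients (the reduction, the gap structure, and the large-$\lambda$ behaviour) are routine.
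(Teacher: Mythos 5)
Your reduction to the self-adjoint pencil $\mathcal{A}(\lambda)$ is correct, and the overall strategy (the spectral flow of $\lambda\mapsto\mathcal{A}(\lambda)$ over $(0,\infty)$ equals $\pm(P-Q)$, so a nonzero difference forces a crossing, i.e.\ a real positive eigenvalue of $M$) is a legitimate route that in finite dimensions reduces to a two-line signature count. But as written there is a genuine gap, and it sits exactly where you yourself locate the ``hard part'': the identity asserting that the spectral flow equals $P-Q$ is supported only by a heuristic about eigenvalues ``migrating from opposite sides'' and is never established. In the present setting it is not routine. At $\lambda=0$ the essential spectrum of $\mathcal{A}(0)=\mathrm{diag}(-D_{+},D_{-})$ fills all of $(-\infty,0]$ (the background is nonzero, so $D_{-}\to\partial_x^2$ at infinity while $-D_{+}\to-\partial_x^2+2r^2$), the $P$ relevant eigenvalues of the block $-D_{+}$ are \emph{embedded} in that continuum, and $D_{-}$ has a threshold resonance (the wave $u$ itself, bounded but not decaying). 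Whether each of those $P$ embedded eigenvalues survives the coupling, detaches into the gap, and crosses zero with the correct orientation --- and whether nothing else crosses --- is precisely the content of the lemma; it is also essentially the content of Theorem 3.14 of \cite{peli08}, the count quoted in Remark \ref{rem1}, from which $N_r^{+}+N_r^{-}\geq|P-Q|$ already follows. Deferring this step to \cite{peli08} therefore makes the argument circular as a proof of Lemma \ref{th:main}, and \cite{jone88} does not supply it either, since Jones works on a zero background where $\lambda=0$ \emph{is} an eigenvalue and the degeneracy there is of a different nature.

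For contrast, the paper avoids the $\lambda=0$ spectral bookkeeping altogether by working in the Lagrangian Grassmannian $\Lambda(2)$: for each real $\lambda$ that is not an eigenvalue, the evolved unstable subspace traces a closed loop whose Maslov index $I(\lambda)$ is a homotopy invariant; $I(\lambda)=0$ for $\lambda$ large; and at $\lambda=0$ the system decouples into a flow on two circles, where Sturm oscillation applied to the two scalar problems in (\ref{eq:pandq}) gives $I(0)=P-Q$ directly. The one observation specific to this setting --- and the reason the argument closes more cleanly than in \cite{jone88} --- is that on a nonzero background $u\notin L^2$, so $\lambda=0$ is generically not an eigenvalue and the $\lambda=0$ orbit genuinely is a closed loop. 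If you want to complete your version, the most economical fix is to prove your spectral-flow identity by exactly this Maslov-index computation (the two quantities agree by standard spectral-flow/Maslov-index correspondences), at which point the two proofs coincide.
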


\begin{proof}
Following the development of \cite{jone88}, we interpret the search for real eigenvalues as a shooting problem in the three-dimensional space of Lagrangian planes, denoted $\Lambda(2)$.  Considering (\ref{eq:linear}) as a first-order four-dimensional system, if $\lambda$ is real and positive, then the asymptotic system has four distinct real eigenvalues $\mu_{ss} < \mu_{s} < 0 < \mu_{u} < \mu_{uu}$ with eigenvectors $w_{ss}$, $w_{s}$, $w_u$, $w_{uu}$, respectively.  Such a positive real number $\lambda$ is an eigenvalue of (\ref{eq:linear}) if there is a nontrivial connection between the two-dimensional \emph{unstable} space at $-\infty$, spanned by $w_u$ and $w_{uu}$, and the \emph{train} of the two-dimensional \emph{stable} space at $+\infty$ spanned by $w_s$ and $w_{ss}$.  If $\lambda$ is not an eigenvalue, there is no connection, and the unstable subspace must connect back to itself, forming a closed loop in $\Lambda(2)$.

Note that the space of Lagrangian planes, $\Lambda(2)$, has a natural notion of winding known as the Maslov index.  The fundamental group of $\Lambda(2)$ is the integers, $\mathbb{Z}$, and the Maslov index of a closed curve in
$\Lambda(2)$ is the winding number of that curve.  And so, for any real number $\lambda$ that is not an eigenvalue, we can compute the Maslov index, denoted $I(\lambda)$, for the solution of (\ref{eq:linear}) asymptotic to the unstable space at $-\infty$.  If, for two real numbers $\lambda_1$ and $\lambda_2$, we find that $I(\lambda_1) \neq I(\lambda_2)$, then the closed loop must break for some value of $\lambda$ between $\lambda_1$ and $\lambda_2$.  At that point, there is a nontrivial connection between the \emph{unstable} space at $-\infty$ and the \emph{train} of the \emph{stable} space at $+\infty$, that is, $\lambda$ is an \emph{eigenvalue}.

For $\lambda$ sufficiently large there is no winding at all and we have $I(\lambda)=0$.  As $\lambda$ approaches zero, the equations for $p(x)$ and $q(x)$ in (\ref{eq:linear}) decouple and the dynamics reduce to the flow on two circles $S^1 \times S^1$ inside $\Lambda(2)$.  As in \cite{jone88}, it is the flow on these two circles within $\Lambda(2)$ that allows us to compute $I(\lambda)$ for $\lambda$ near zero.  However, in that previous work, the Maslov index could not be used directly because the trajectories of interest were not closed.  $0$ was necessarily an eigenvalue of (\ref{eq:linear}) since the standing wave $u(x)$ itself served as an eigenfunction of $D_{-}$ in (\ref{eq:minusoperator}).  In the present case, $0$ is generally \emph{not} an eigenvalue, as the standing wave solution $u(x)$ is not square-integrable.  In particular, away from bifurcation points where there is a nontrivial bounded solution of $D_{+}v = 0$, the distinguished orbit in $\Lambda(2)$ associated with $\lambda=0$ forms a closed loop.  We can then immediately conclude that $I(0) = P-Q$.  Hence, there must be at least $|P-Q|$ positive real eigenvalues (counted with multiplicity) and a standing wave can be stable only if $P-Q=0$. \qed
\end{proof}

\begin{remark}
\label{rem1}
Lemma \ref{th:main} complements the counting argument of \cite{peli08} (see Theorem 3.14 therein). Under some assumptions on 
the algebraic simplicity of the eigenvalues of the spectral problem (\ref{eq:linear}), the theorem shows that if $N_c$ is the number of complex eigenvalues $\lambda$ in the first quadrant, $N^-_i$ is the number of purely imaginary eigenvalues $\lambda$ with Im$(\lambda) > 0$ and $(q,D_+^{-1}q)\leq0$ and $N_r^\pm$ is the number of real positive eigenvalues $\lambda$ with $(q,D_+^{-1}q)\geq0$ and $(q,D_+^{-1}q)\leq0$ respectively, where $(\cdot,\cdot)$ is the standard inner product in $L^2(\mathbb{R})$, then 
\begin{equation}
N_r^-+N_c+N_i^-=P,\quad N_r^++N_c+N_i^-=Q.
\label{ca}
\end{equation}
Using (\ref{ca}), we obtain that when $P=Q$ there will be at most $(P+Q)$ unstable eigenvalues. In particular, when $P=Q=0$, we know that the solution must be stable. 
\end{remark}

\subsection{Phase plane results}

We now use Lemma \ref{th:main} to establish the \emph{instability} of the bubble and asymmetric standing waves guaranteed in Theorem \ref{th:exist}.

\begin{theorem}  \label{th:stability} Consider the positive-valued standing waves described in Theorem \ref{th:exist}.  In parameter regimes where only a ``dip'' and a ``bubble'' are present, the bubble must be unstable with at least one positive real eigenvalue.  In parameter regimes where a ``dip'', a ``bubble'' and two asymmetric waves are present, the asymmetric waves are both unstable with at least one positive real eigenvalue, while the symmetric ``bubble'' is unstable with at least two positive real eigenvalues.
\end{theorem}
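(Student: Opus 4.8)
The plan is to apply Lemma \ref{th:main}, for which the entire task reduces to computing the two integers $P$ and $Q$ for each of the standing waves produced in Theorem \ref{th:exist}. Since every wave under consideration is \emph{positive-valued}, the profile $u(x)$ has no zeros, so by (\ref{eq:pandq}) we immediately obtain $Q=0$. Consequently it suffices to show that $P\ge 1$ for the bubble and for each asymmetric wave, and that $P\ge 2$ for the symmetric bubble once the asymmetric waves are present: in each case $|P-Q|=P$, and Lemma \ref{th:main} then furnishes at least $P$ positive real eigenvalues of $M$, hence instability.

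To compute $P$ I would work directly with the equation of variations $D_+v=0$, whose nontrivial solutions determine $P$ through their number of zeros. The key elementary observation is that in each of the two \emph{autonomous} regions the phase velocity $u'=v$ solves $D_+v=0$ (differentiate the second equation of (\ref{stat2}) with respect to $x$), and that the solution decaying as $x\to-\infty$ — the one whose zeros count $P$ — is proportional to $u'$ on the outer interval $(-\infty,-L]$. The genuine $C^1$ variational solution departs from $u'$ only because $u''$ suffers a jump across each switching point $x=\pm L$; a short computation gives this jump as $u_0(V-2u_0^2)$, which vanishes \emph{exactly} at the pitchfork value $u_0=\sqrt{V/2}$. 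At that value $u'$ is itself a bounded solution of $D_+v=0$, so $0$ is an eigenvalue of $D_+$, confirming that $u_0=\sqrt{V/2}$ is the one place along the symmetric family at which the count $P$ can change.

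The counts themselves I would then read off from the composite phase plane and propagate by continuation. For the dip the variational solution is sign-definite — the outer part inherits the fixed sign of $u'$, and the single inner excursion introduces no sign change — so $P=0$, consistent with possible stability and serving as a reference. For the bubble the trajectory executes a full loop around the inner elliptic equilibrium, and this extra excursion forces the variational solution to change sign, giving $P\ge 1$. Finally, tracking $P$ along the symmetric branch as $u_0$ passes through $\sqrt{V/2}$, the index can jump only at the pitchfork just identified; the exchange of index characteristic of the (subcritical) pitchfork raises the symmetric bubble to $P\ge 2$ precisely as the two asymmetric branches split off, while each asymmetric wave inherits $P\ge 1$. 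Combining these counts with $Q=0$ and Lemma \ref{th:main} yields the stated instabilities, the hypothesis ``away from a bifurcation point'' guaranteeing that $0$ is not an eigenvalue so that the underlying Maslov loop is closed.

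The main obstacle I anticipate is turning the qualitative phase-plane picture into a rigorous zero count for the true variational solution across the two \emph{non-autonomous} switching points, where $u'$ has a corner and no longer solves $D_+v=0$ globally. Making precise why a full loop around the inner center contributes exactly one net zero (rather than two), pinning down the base count $P=0$ for the dip, and controlling the continuation of $P$ through $u_0=\sqrt{V/2}$ will all require a careful Sturm comparison argument together with the behaviour of the shooting time map $2L(u_0)$ described in the proof of Theorem \ref{th:exist}.
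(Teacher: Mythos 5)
Your reduction is the same as the paper's: every wave is positive so $Q=0$ by (\ref{eq:pandq}), and the theorem follows from Lemma \ref{th:main} once $P$ is computed for each wave. Your observation that $u'$ solves $D_+v=0$ on each autonomous piece, that the decaying solution at $-\infty$ is proportional to $u'$ on $(-\infty,-L]$, and that the jump $u(V-2u^2)$ in $u''$ vanishes exactly at $u_0=\sqrt{V/2}$ (so that $u'$ becomes a bounded kernel element of $D_+$ at the pitchfork) is correct and consistent with the paper's remark that the inner trajectory is tangent to the parabolas there. But the substantive content of the theorem is the actual determination of $P$, and that is exactly the part you defer. Two steps in particular are asserted rather than proved. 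First, ``the full loop around the inner elliptic equilibrium forces the variational solution to change sign, giving $P\ge 1$'' is the claim that needs an argument, not a consequence of the picture. The paper supplies the missing device: let $\Gamma$ be the image of the upward parabola under the time-$2L$ inner flow; tangent vectors to the upward parabola at $x=-L$ evolve to tangent vectors to $\Gamma$ at $x=+L$, so the \emph{direction} in which $\Gamma$ crosses the downward (target) parabola at each intersection determines the parity of the rotation. Since the end of $\Gamma$ near the cut-off lies below the target parabola, the dip is an under-to-over crossing ($P=0$) and the bubble, being the adjacent intersection, is an over-to-under crossing ($P=1$). Nothing in your sketch replaces this.

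Second, your route to $P\ge 2$ for the symmetric bubble --- ``the exchange of index characteristic of the subcritical pitchfork raises $P$'' --- does not work as stated. Knowing that $D_+$ acquires a kernel at $u_0=\sqrt{V/2}$ does not tell you in which direction $P$ changes, nor by how much, nor that $\sqrt{V/2}$ is the \emph{only} place it can change: zero eigenvalues of $D_+$ correspond to tangential intersections of $\Gamma$ with the target parabola, i.e.\ to \emph{any} bifurcation of the wave family (saddle-nodes included), not only to the value at which $u'$ becomes $C^1$. Moreover the theorem is stated at fixed parameters where all four waves coexist, so a continuation argument must specify the path and control $P$ along it. The paper instead argues directly at the given parameters: for the bubble in this regime the transient immediately rises above the upward parabola, the variational vector $p(x)$ is trapped a strict clockwise rotation behind the tangent to the transient throughout $(-L,L)$ by uniqueness of solutions, and when the tangent has swung around to point above the downward parabola at $x=L$ one concludes $P\ge 2$; the fact that adjacent intersections of $\Gamma$ with the target parabola change $P$ by exactly one then pins down $P=2$ for the bubble and $P=1$ for the intermediate (asymmetric) intersections. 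Your proposal would need a Sturm-type argument of comparable precision to close these gaps; as written it is a plan, not a proof.
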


\begin{proof}
All of these pulses are positive-valued, so $Q=0$.  Successful application of the instability criterion depends upon the computation of $P$ for each of the standing waves under consideration.

$P$ can be determined by evolving a vector $p(x)$ according to (\ref{eq:plusoperator}), where $p(x)$ is initialized to be tangent to the unstable manifold of $(r,0)$ in the outer system (\ref{stat2a}), that is, on the upward parabola. $P$ is the number of times such a vector passes through verticality.  Recall that (\ref{eq:plusoperator}) acts as the equation of variations for the standing wave equation; except at the interfaces $x=\pm L$, vectors tangent to the standing wave in the $uv$-phase plane will remain tangent as they evolve.

\begin{figure}[tbhp!]
\begin{center}
\subfigure{\includegraphics[width=2.3in]{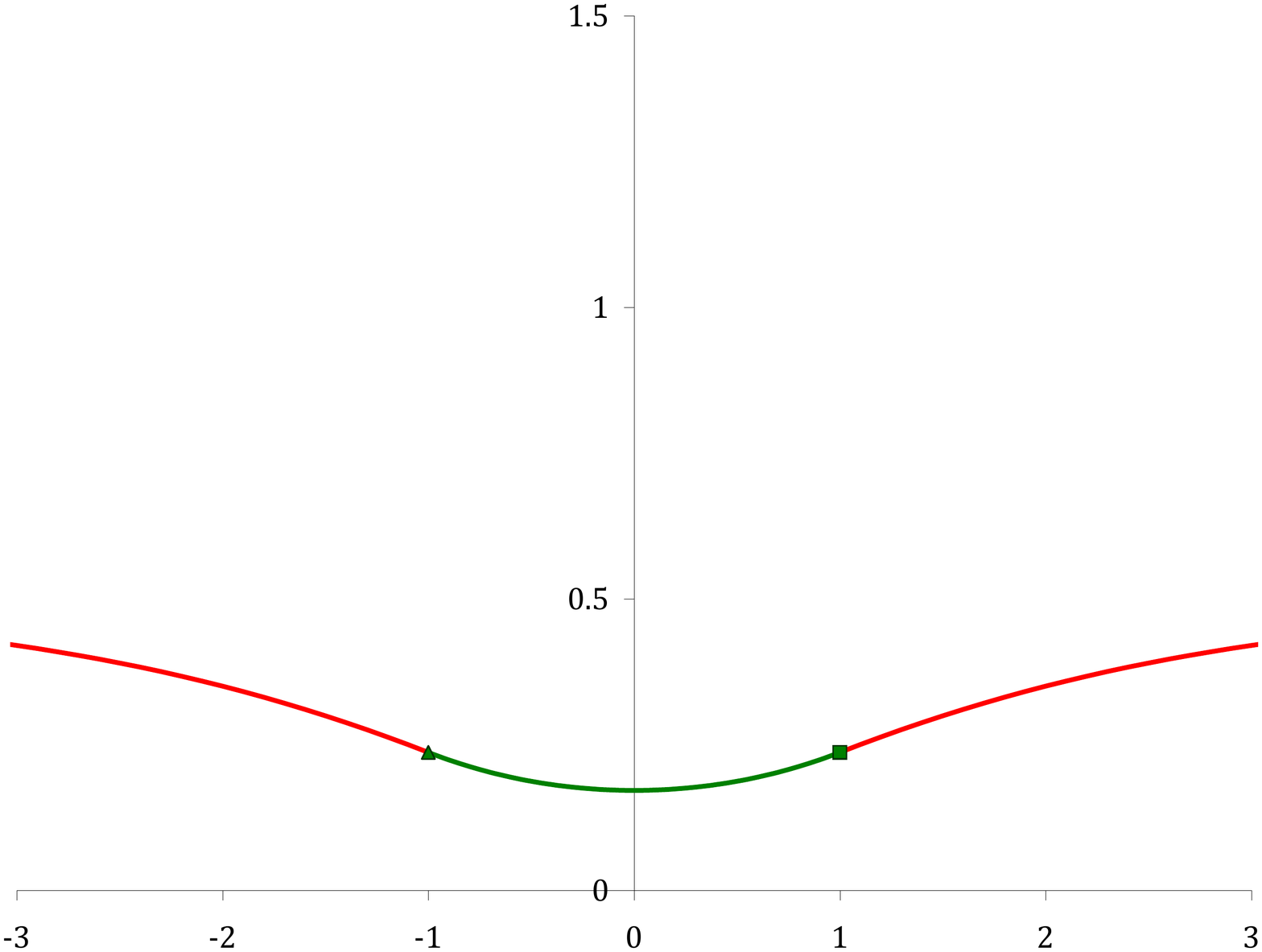}}
\subfigure{\includegraphics[width=2.3in]{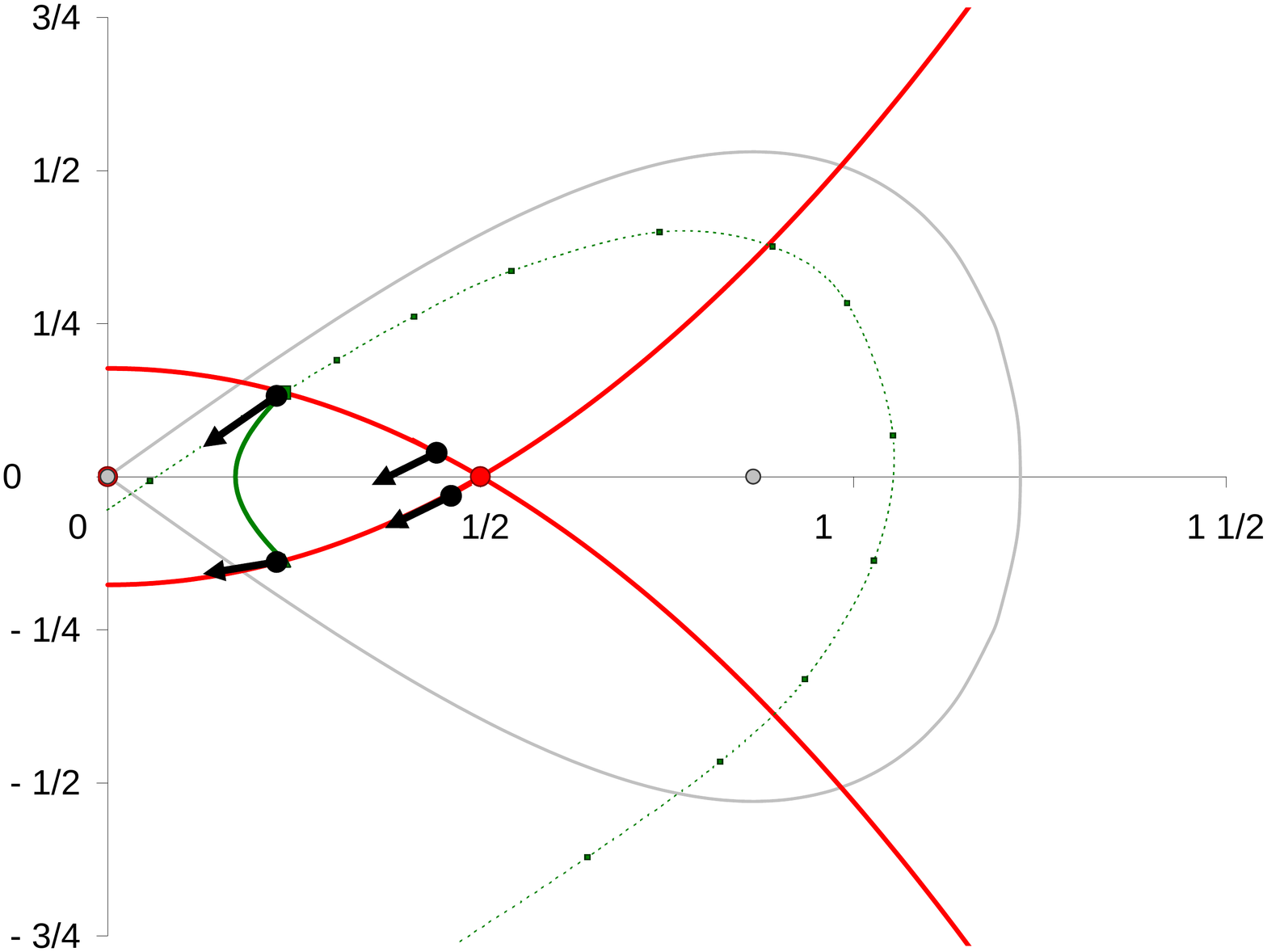}}
\subfigure{\includegraphics[width=2.3in]{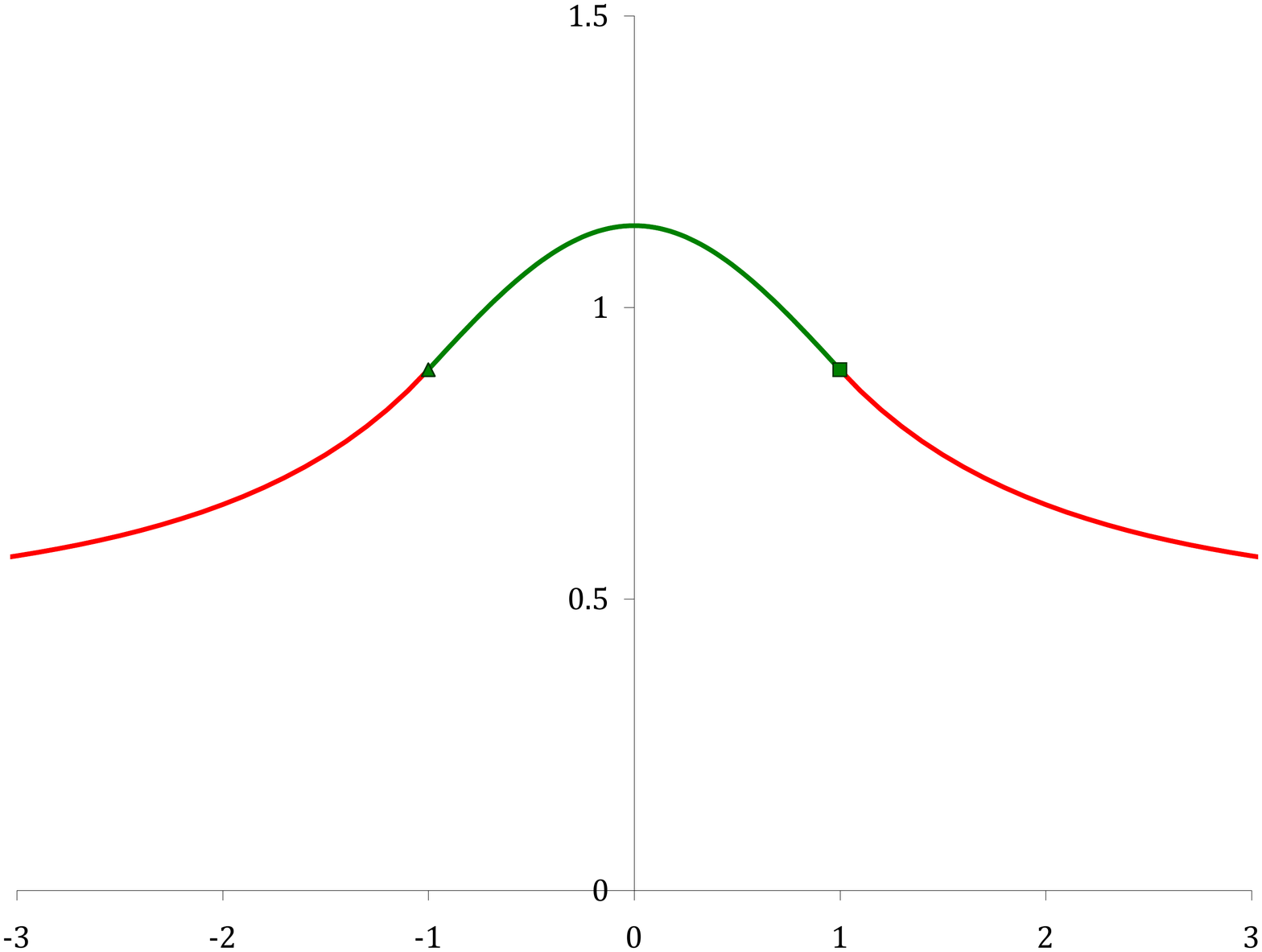}}
\subfigure{\includegraphics[width=2.3in]{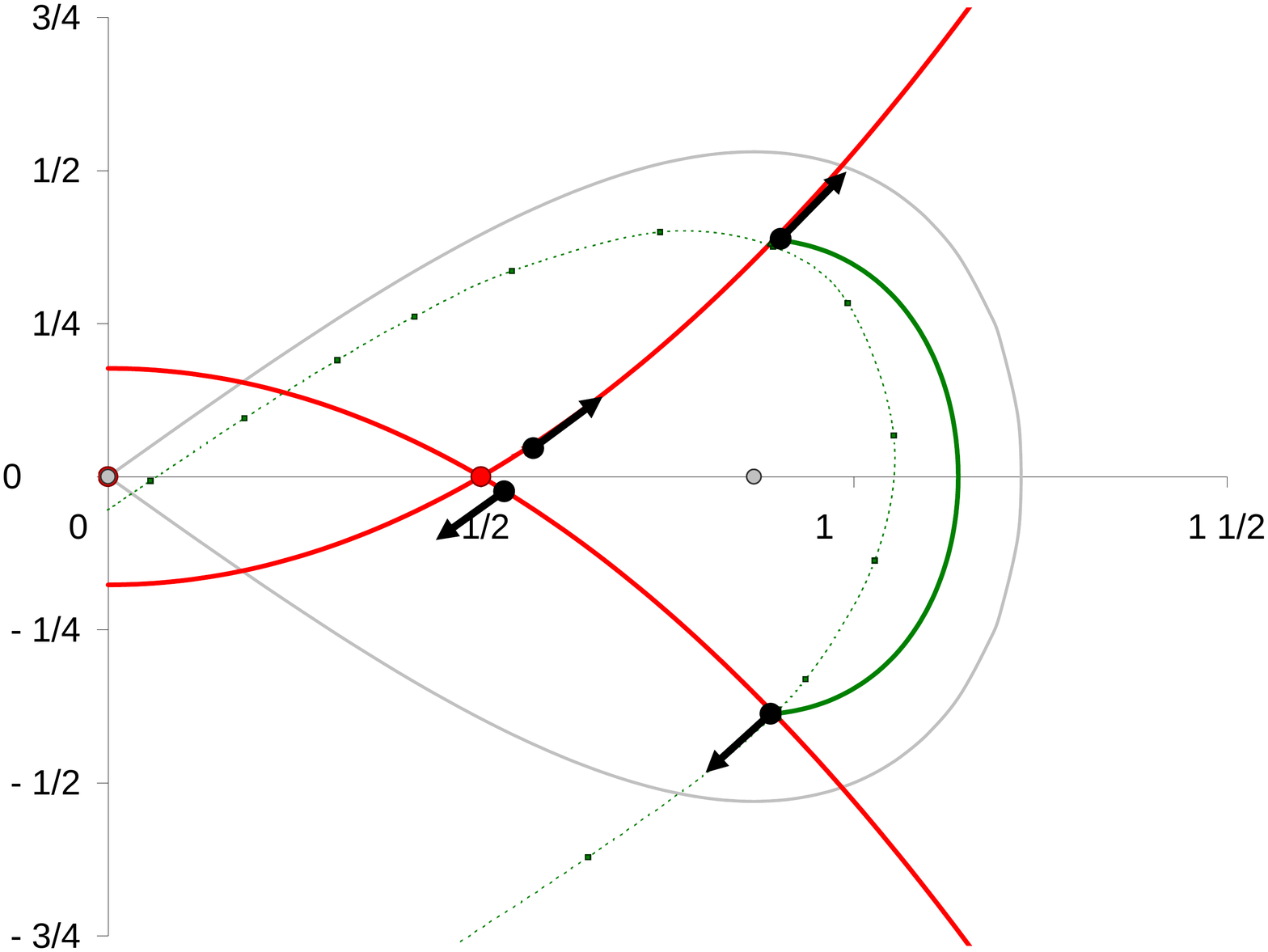}}
\end{center}
\caption{For $V=1$, $L=1$ and $r^2=0.25$, there are 
two symmetric solutions, a ``dip'' and a ``bubble''.  Their profiles ($u$ versus $x$) are shown in the left panels, and their corresponding phase portraits and a cartoon demonstrating the computation of $P$ for each pulse are shown in the right panels.  For the dip, $P=0$ (no rotation), and for the bubble $P=1$ (rotation by $\pi$).  Since $Q=0$ for positive pulses, we know immediately that the bubble is unstable with at least one real positive eigenvalue.}
\label{fig:winding_small}
\end{figure}

For fixed parameters $r^2$, $V$ and $L$, consider the curve $\Gamma$ obtained by evolving the set of points on the (upward) parabola
$$\left\{(u_0,(u_0^2-r^2)/\sqrt{2}): u_0 \geq \sqrt{(V-\sqrt{V^2-2r^4})/2}\right\}$$
for time $2L$ via the inner system (\ref{stat2b}).  Intersections of this curve with the downward parabola $(u_0,-(u_0^2-r^2)/\sqrt{2})$ represent standing waves.  Suppose that the ``dip'' and the ``bubble'' are the only two standing waves present.  The end of $\Gamma$ near the cut-off must lie below the downward parabola, since the endpoint of $\Gamma$ is set up to be on the stable manifold of $(0,0)$ in the inner system.  Since the ``dip'' is assumed to exist for these parameter values, it must correspond to the intersection of $\Gamma$ and the downward parabola nearest this end, where $\Gamma$ passes from under to over the downward parabola.  Since $\Gamma$ is a smooth curve, if there are only two intersections, the other, corresponding to the ``bubble'', must occur as $\Gamma$ passes back from over to under the downward parabola.  Because vectors that are tangent to the upward parabola for $x=-L$ will evolve to be tangent to $\Gamma$ at $x=+L$, we can immediately conclude from these crossings that $P=0$ for the dip and $P=1$ for the bubble.  In particular, the bubble is unstable with at least one positive real eigenvalue. See Figure \ref{fig:winding_small}.

\begin{figure}[tb!]
\begin{center}
\subfigure{\includegraphics[width=2.3in]{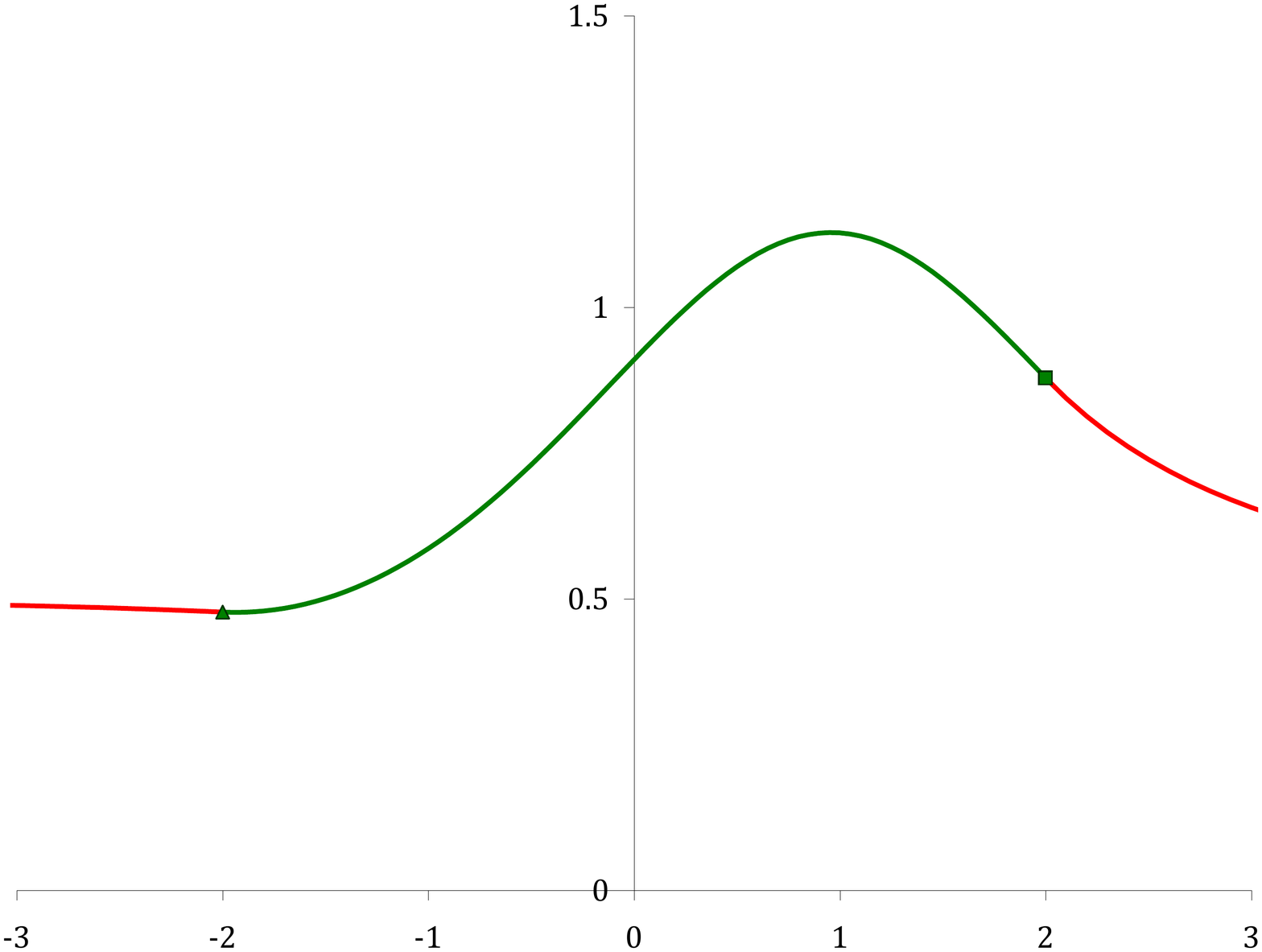}}
\subfigure{\includegraphics[width=2.3in]{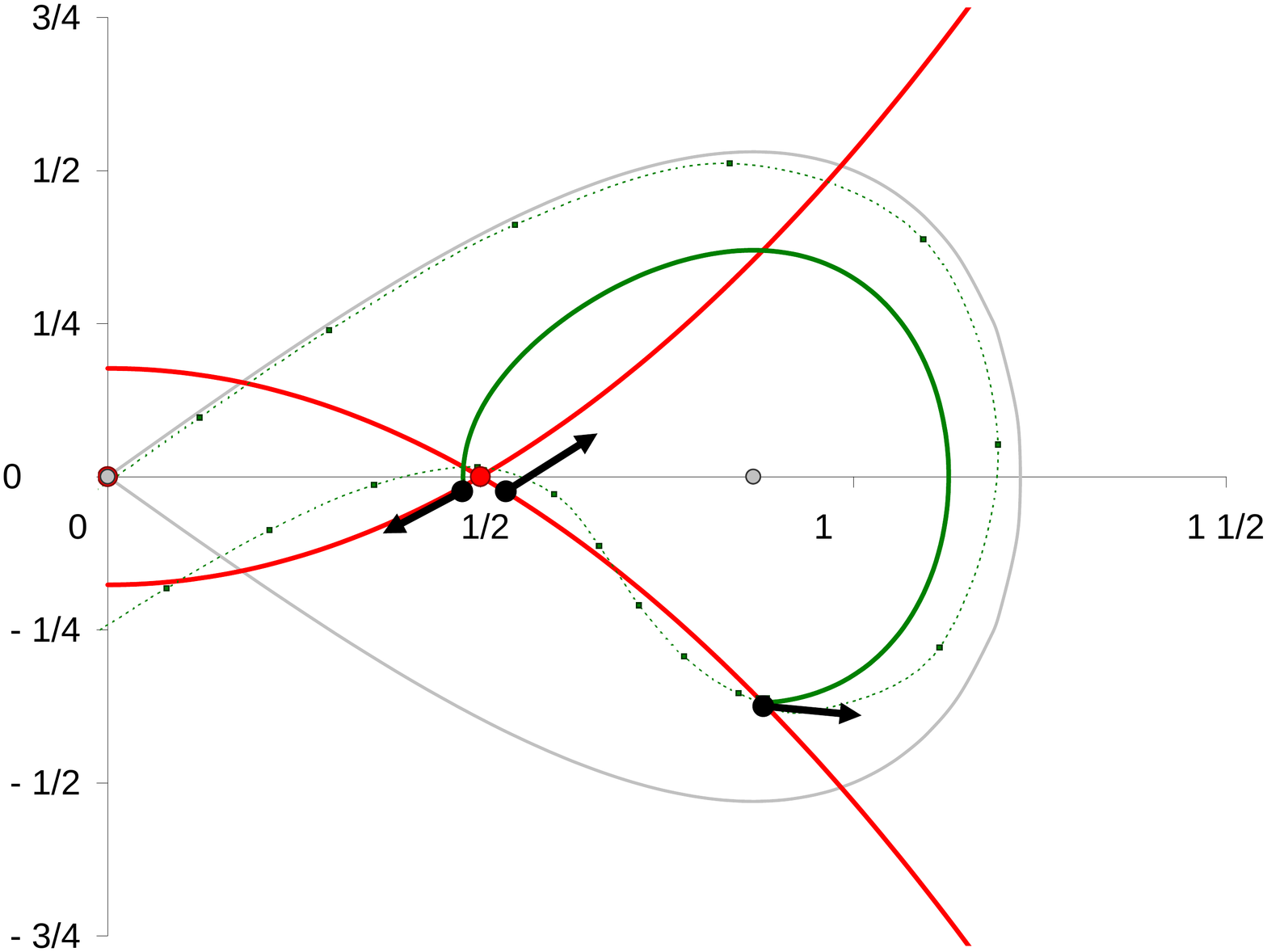}}
\subfigure{\includegraphics[width=2.3in]{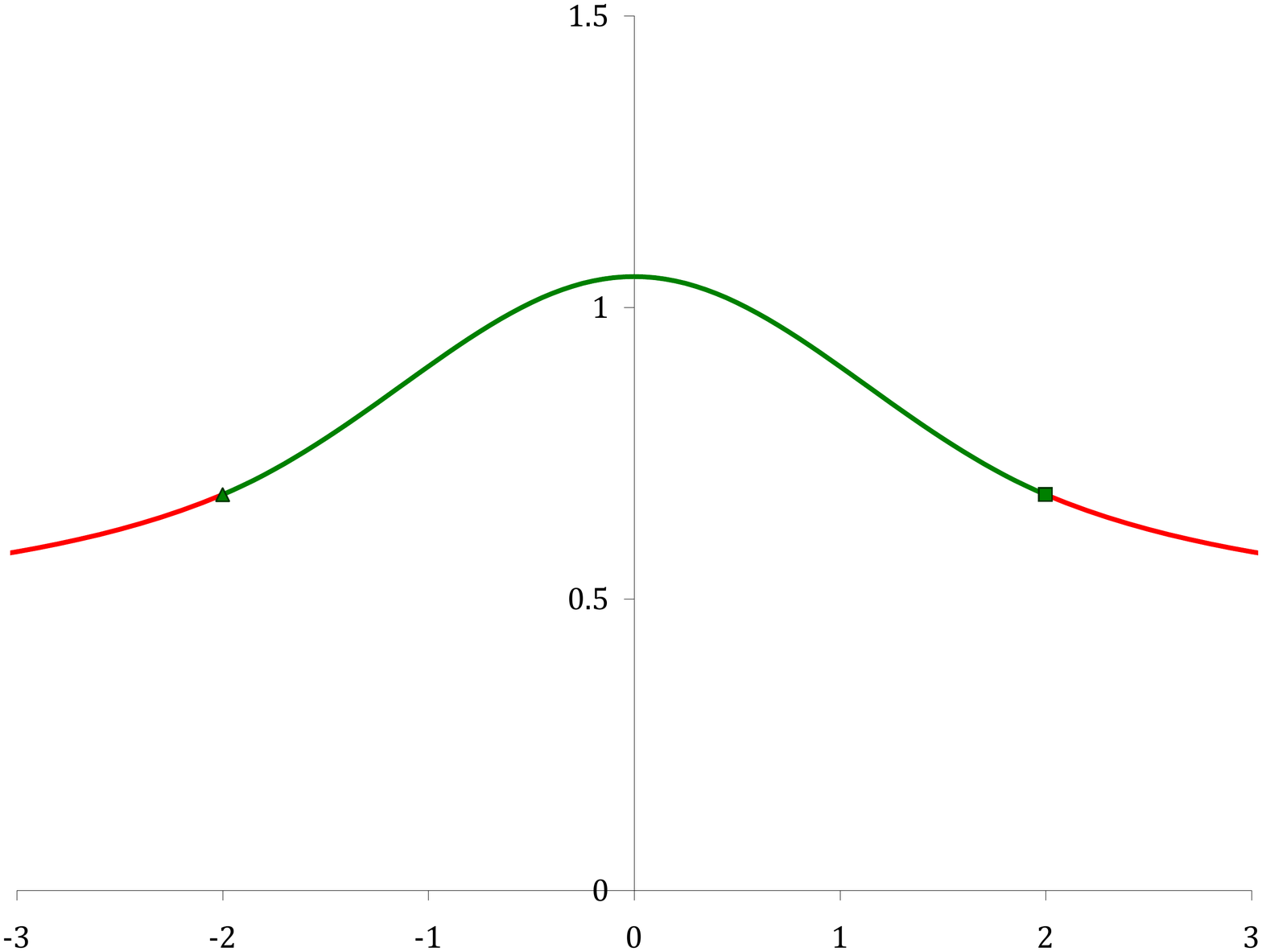}}
\subfigure{\includegraphics[width=2.3in]{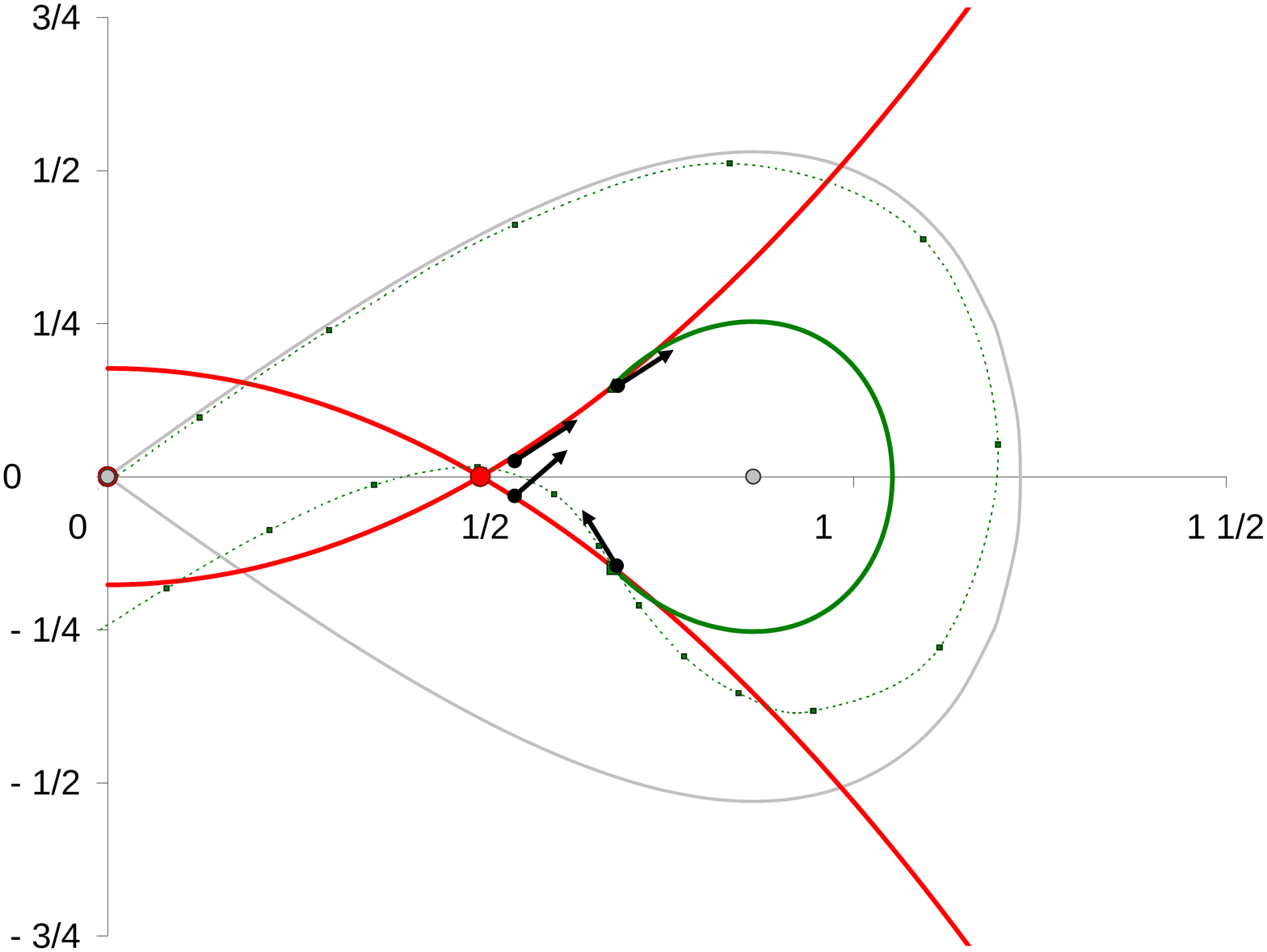}}
\end{center}
\caption{The same as in Figure \ref{fig:winding_small}, but for $V=1$, $L=2$ and $r^2=0.25$. Here, there are 
two symmetric solutions and a pair of asymmetric solutions.  (Because the two asymmetric solutions overlap significantly in the phase plane, we show only one.)  Plots and computations of the symmetric ``bubble'' and one of the asymmetric pulses are shown. 
For the asymmetric pulse $P=1$ (rotation by $\pi$)and for the bubble $P=2$ (rotation by $2\pi$).  Since $Q=0$ for positive pulses, we know immediately that the asymmetric pulse is unstable with at least one real positive eigenvalue and the bubble is unstable with at least two real positive eigenvalues.}
\label{fig:winding_large}
\end{figure}

Now consider parameters $r^2$, $V$ and $L$, so that a ``dip'', a ``bubble'' and two asymmetric waves are present.  In this case, the transient ($-L<x<L$) portion of the ``bubble'' immediately moves above the upward parabola before passing back through the upward and downward parabolas and approaching the downward parabola from below as $x \rightarrow L$.  In particular, a vector $p(x)$ initialized to be tangent to the upward parabola along this solution for $x<-L$ will be a slight ``clockwise'' rotation away from a vector initialized tangent to the transient loop at $x=L$.  And, as the tangent vector evolves for $-L<x<L$, that vector remains tangent to the transient and $p(x)$ must continue to advance clockwise ahead of the tangent vector throughout this interval.  (Because solutions are unique, $p(x)$ cannot pass through the tangent vector is effectively trapped.)  When $x=L$, the tangent vector has rotated around to point into the region above the downward parabola.  Since $p(x)$ is a further clockwise rotation from there, this guarantees that $P \geq 2$ in this case.  However, adjacent intersections of $\Gamma$ with the upward parabola can only correspond to changes of $P$ by a unit amount, so since $P=0$ for the dip, we must have $P=2$ for the bubble here (and $P=1$ for the intermediate intersection, representing one of the asymmetric standing waves).  See Figure \ref{fig:winding_large}. \qed
\end{proof}

In Figure \ref{fig:bif} we describe and summarize Theorems \ref{th:exist} and \ref{th:stability} for the positive solutions in Figs.\ \ref{fig:winding_small} and \ref{fig:winding_large}. When $L=1$ only the two positive-valued symmetric pulses are present. For large enough $L$ (depending upon $V$ and $r^2$), there are asymmetric standing waves. When $L=2$, we see the asymmetric branch appearing. The location of the pitchfork bifurcation in the second panel has been discussed before (see Figure \ref{fig:L}). The numbers besides the curves show the value of $P-Q$ for the corresponding solutions.

\begin{figure}[tbhp!]
\begin{center}
\subfigure{\includegraphics[width=3.0in]{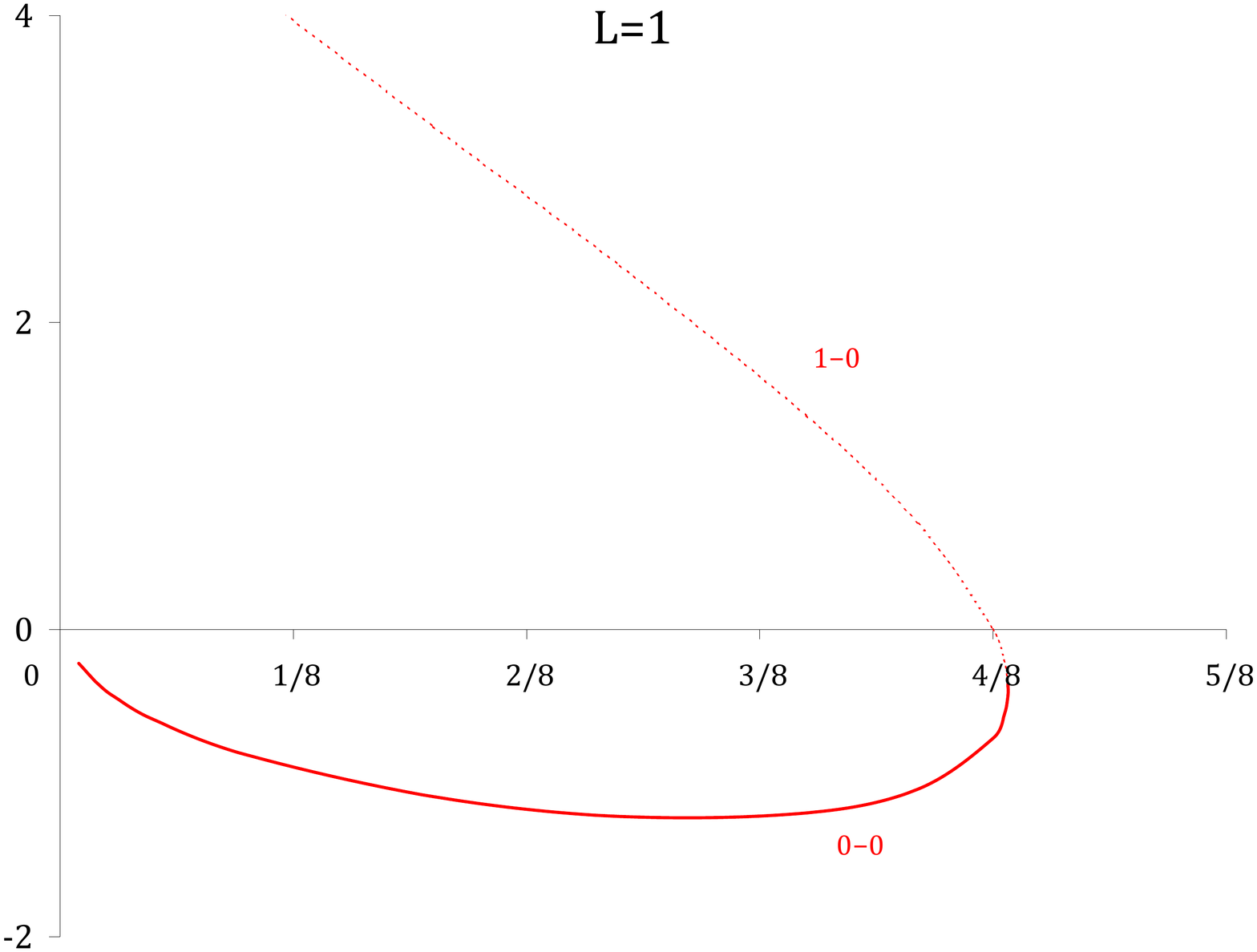}}
\subfigure{\includegraphics[width=3.0in]{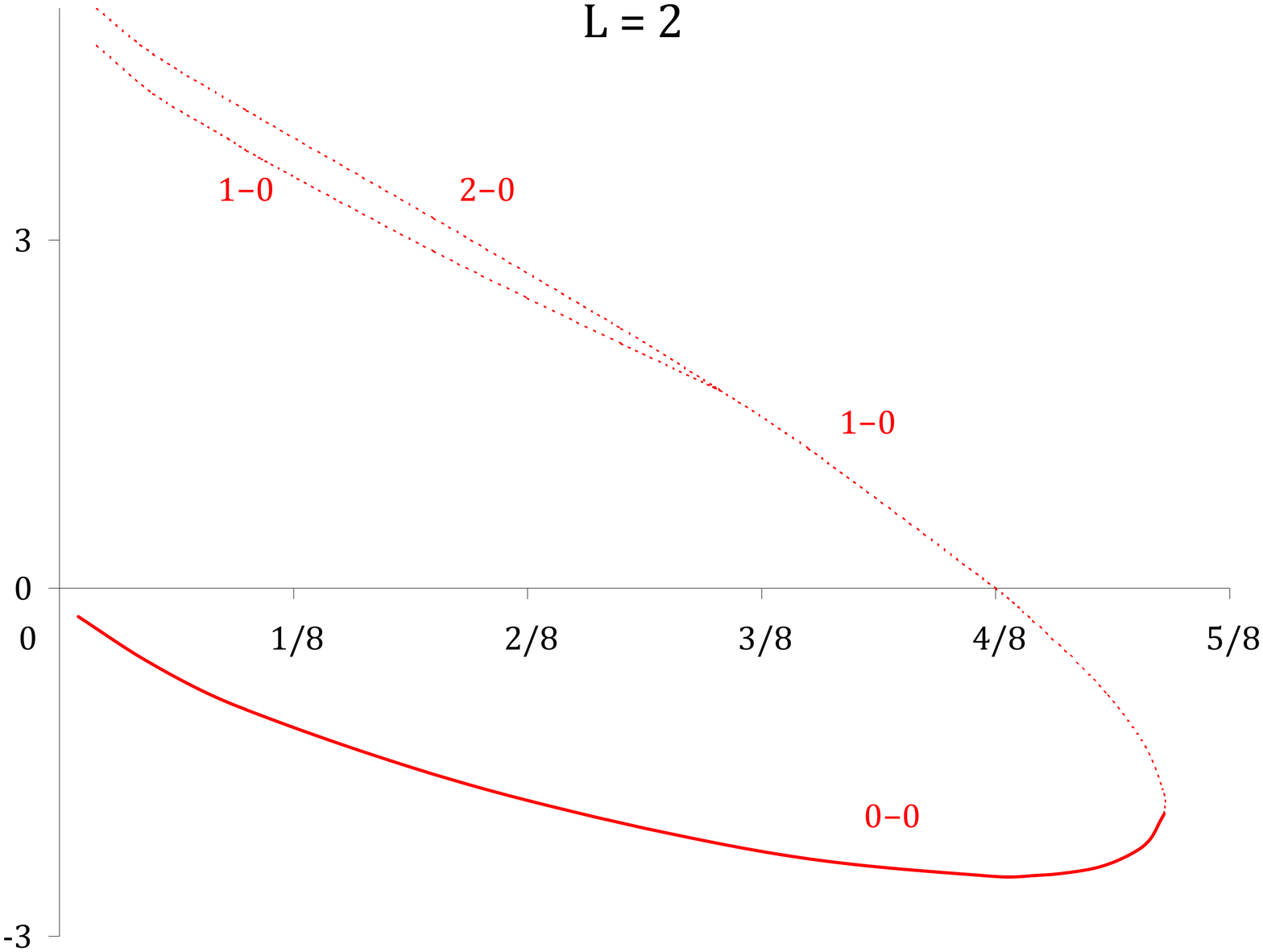}}
\end{center}
\caption{Bifurcation and stability diagrams ($N$ versus $r^2$) for positive standing waves in Figs.\ \ref{fig:winding_small} and \ref{fig:winding_large}. }
\label{fig:bif}
\end{figure}

\section{Numerical results: positive solutions}
\label{sec:1}

\begin{figure}[tb!]
\begin{center}
\subfigure[the ``dip'' of Fig.\ \ref{fig:winding_small}]{\includegraphics[scale=.4]{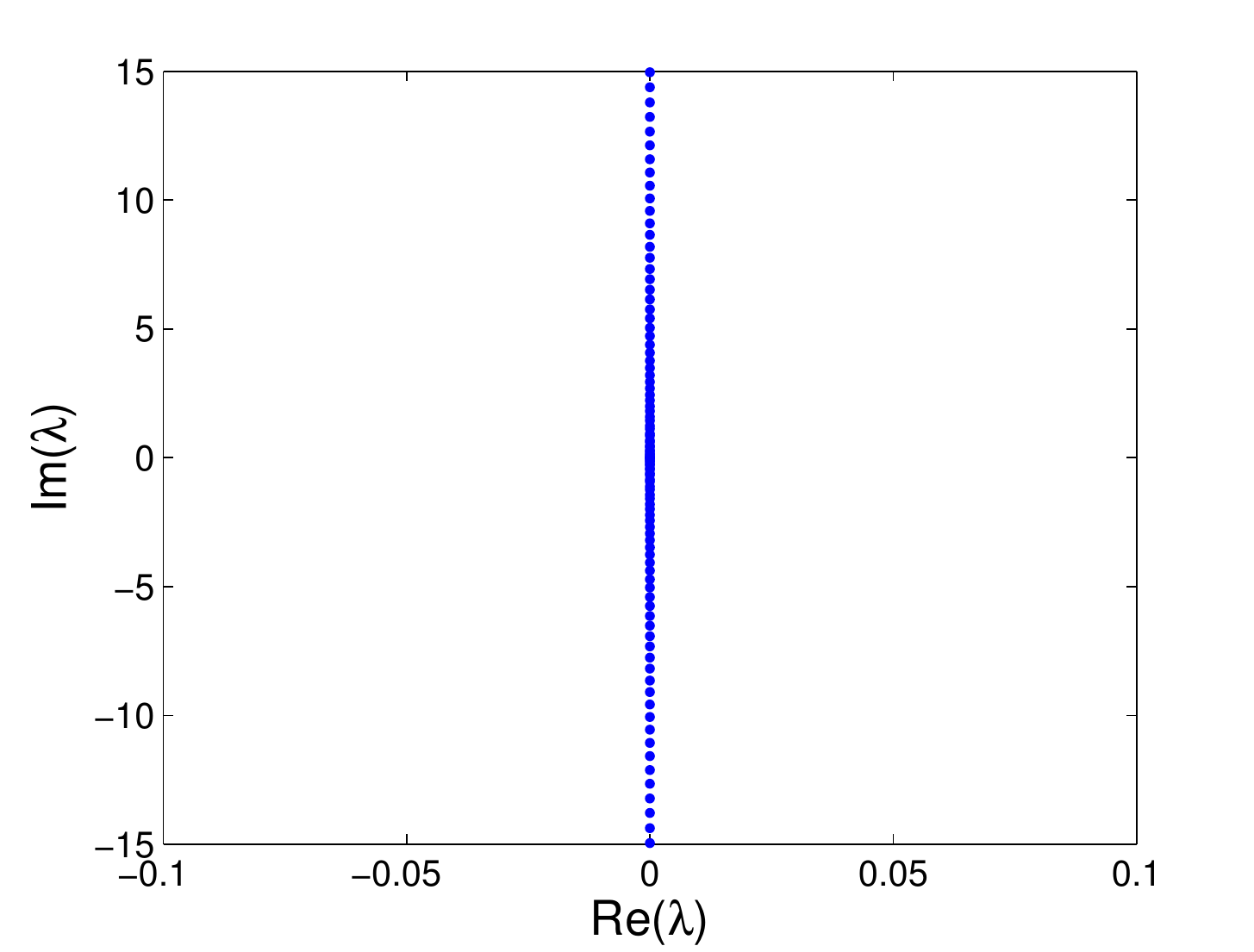}}
\subfigure[the ``bubble'' of Fig.\ \ref{fig:winding_small}]{\includegraphics[scale=.4]{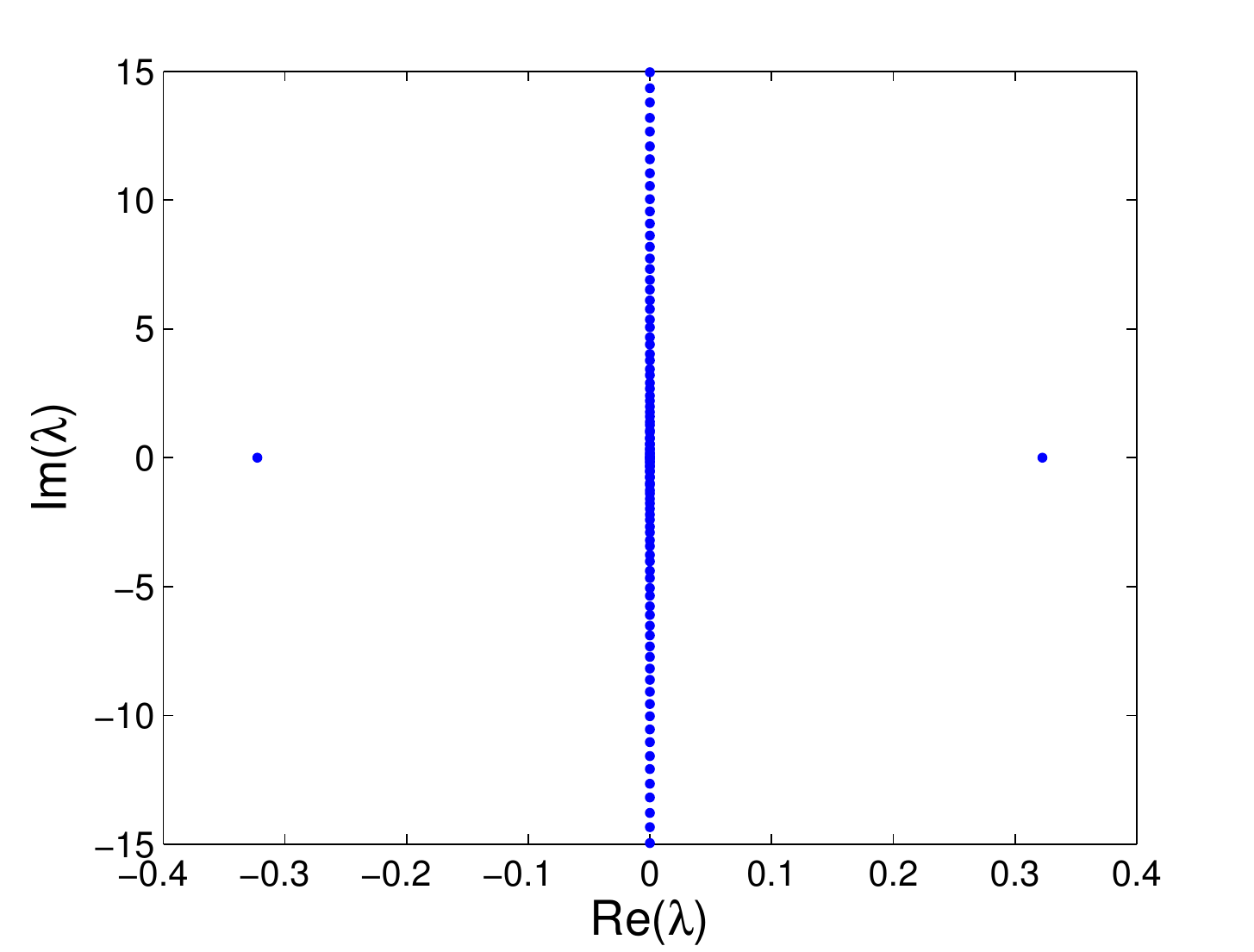}}
\subfigure[the asymmetric ``bubble'' of Fig.\ \ref{fig:winding_large}]{\includegraphics[scale=.4]{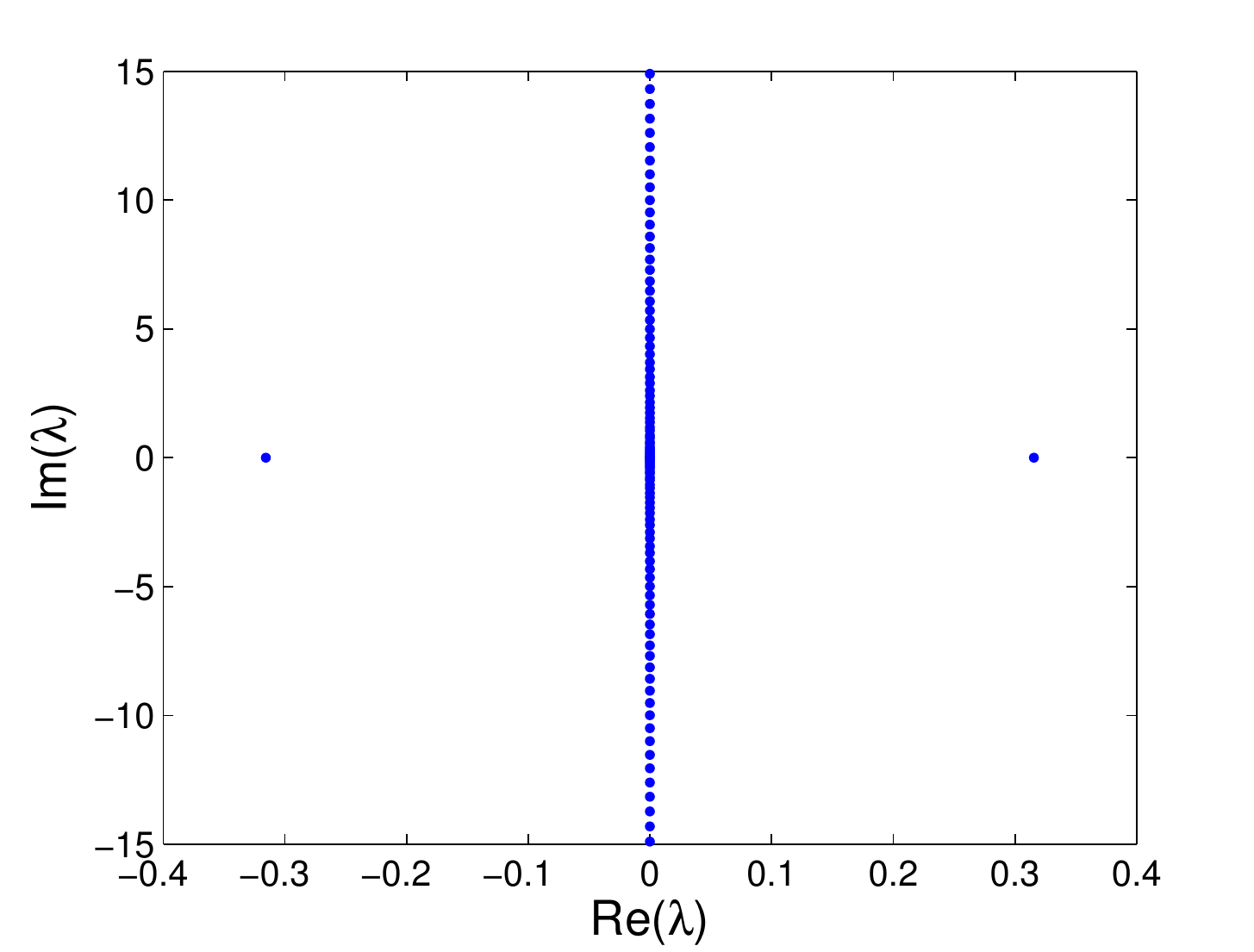}}
\subfigure[the symmetric ``bubble'' of Fig.\ \ref{fig:winding_large}]{\includegraphics[scale=.4]{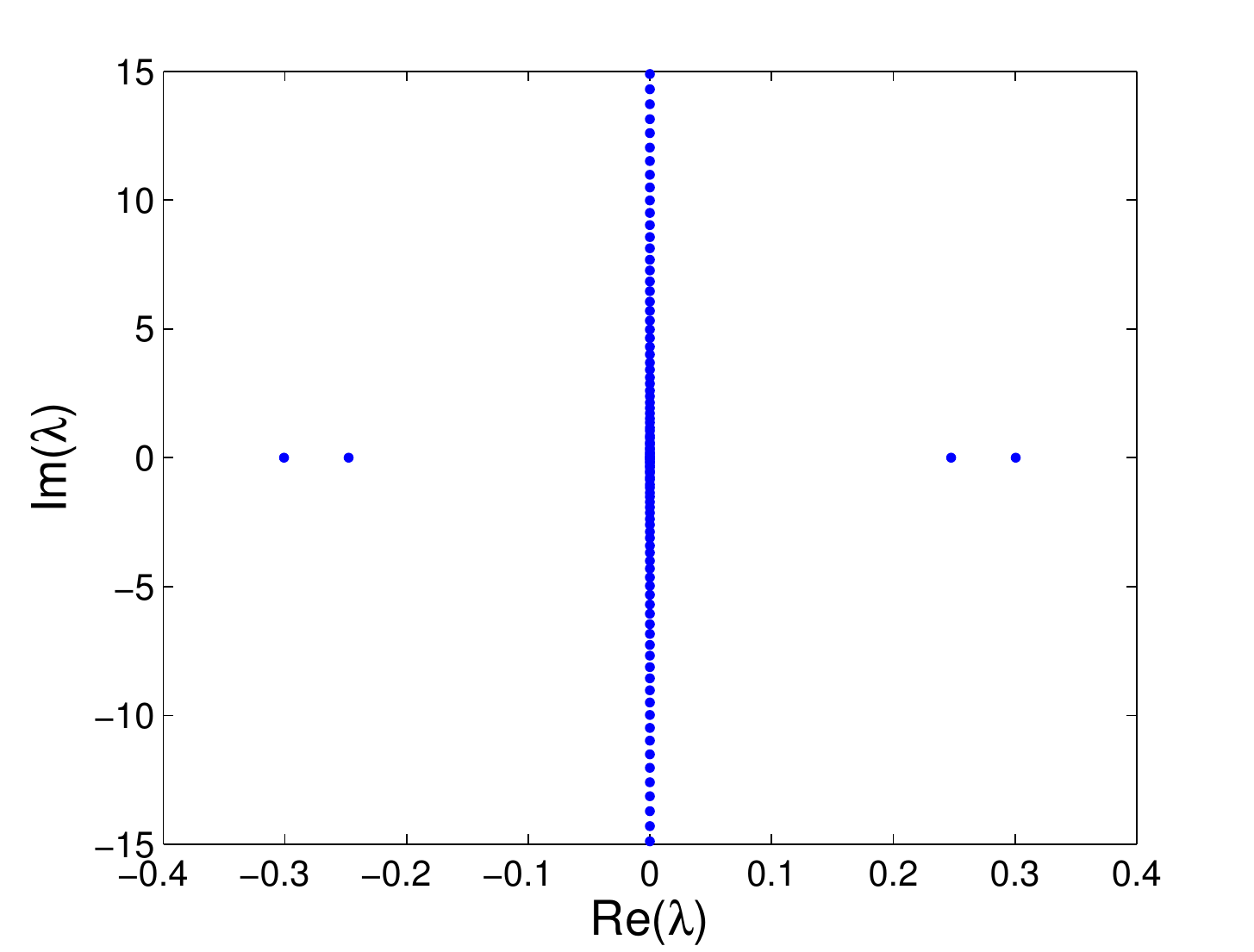}}
\end{center}
\caption{The eigenvalue structures of the solutions in Figures \ref{fig:winding_small} and \ref{fig:winding_large} as indicated in the caption of each panel. 
}
\label{fig:sol1}
\end{figure}

In order to corroborate the analytical 
results of the previous section, we numerically solve the time-independent equation (\ref{stat1}) using a Newton-Raphson method and the eigenvalue problem (\ref{eq:linear}) for a selection of standing waves. Additionally, to study the typical dynamics of an unstable solution, the time-dependent equation (\ref{gov1}) is then integrated numerically using a fourth-order Runge-Kutta method with a numerically exact unstable solution obtained from the time-independent equation as the initial condition.

\begin{figure}[tbhp!]
\begin{center}
{\includegraphics[scale=.4]{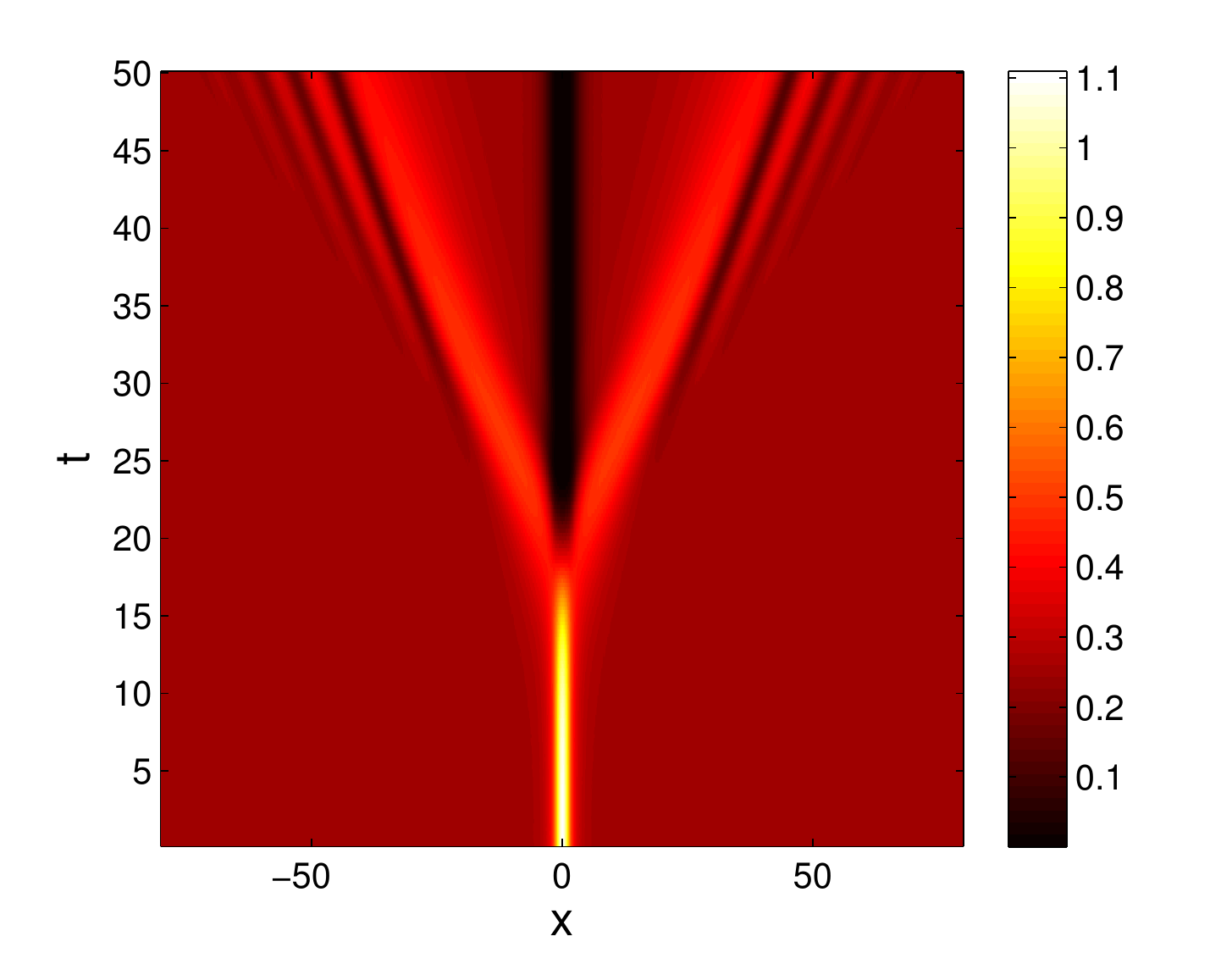}}
{\includegraphics[scale=.4]{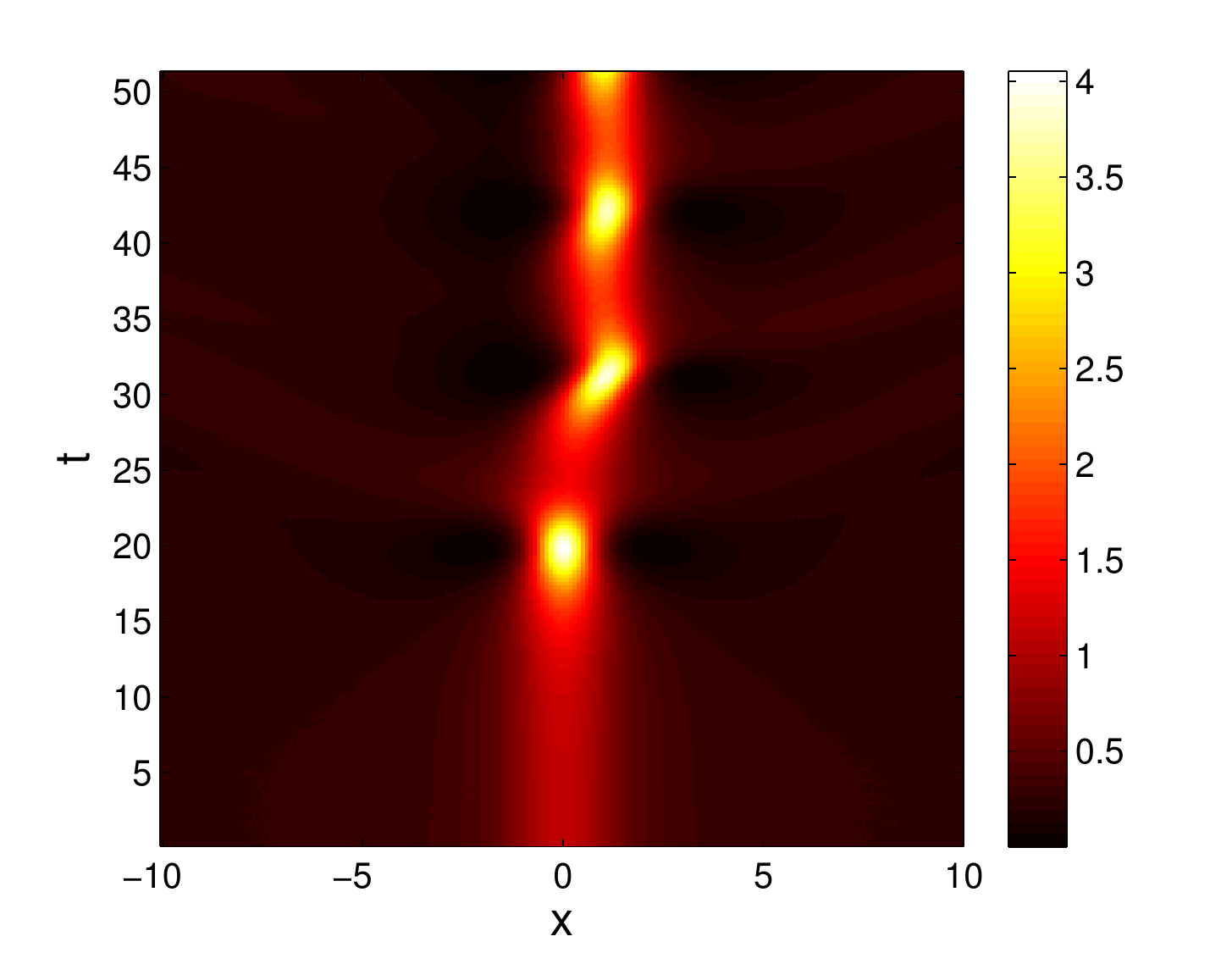}}
\end{center}
\caption{Typical time evolutions of ``bubble'' solutions when $V=1$, $L=2$, and $r^2=0.25$ (two positive real eigenvalues). Shown is the top-view of $|\Psi(x,t)|^2$.}
\label{fig:evol1}
\end{figure}

Shown in Figure \ref{fig:sol1} are the numerically computed eigenvalues of the solutions in Figures \ref{fig:winding_small} and \ref{fig:winding_large} where
the eigenvalue counts come in in full agreement with Theorem \ref{th:stability}. As predicted in the theorem, when the asymmetric standing wave is present, it is unstable with at least one real positive eigenvalue and the symmetric ``bubble'' picks up a second real positive eigenvalue. In fact, ``bubble'' solutions have exactly $|P-Q|$ real positive eigenvalues present. The ``dip'' solution, which is the only positive-valued solution not declared unstable \emph{a priori} because $P-Q=0$, is in fact stable (see Remark \ref{rem1}).

Two typical dynamics of an unstable symmetric ``bubble'' solution (in this case the symmetric solution in Fig.\ \ref{fig:winding_large}) are shown in Fig.\ \ref{fig:evol1} depending on the type of perturbations introduced in the initial conditions of the time integration of Eq.\ (\ref{gov1}). With the initial condition $\Psi(x,0)=\left(1\pm10^{-3}\right)u(x)$ and $u(x)$ being a numerically exact solution, when the sign is minus, one would obtain that the ``bubble'' evolves into a ``dip'' and releases two traveling humps that later develop shock waves. This is depicted in the left panel of the figure. When the sign is plus, as a manifestation of the instability, a solitary wave is clearly created as shown in the right panel of the figure, which is akin to the formation of bright solitons due to modulational instability \cite{abdu05,kevr04}. 

\section{Excited solutions}

\begin{figure}[tbhp!]
\begin{center}
\subfigure[$r^2=0.80$]{\includegraphics[scale=.4]{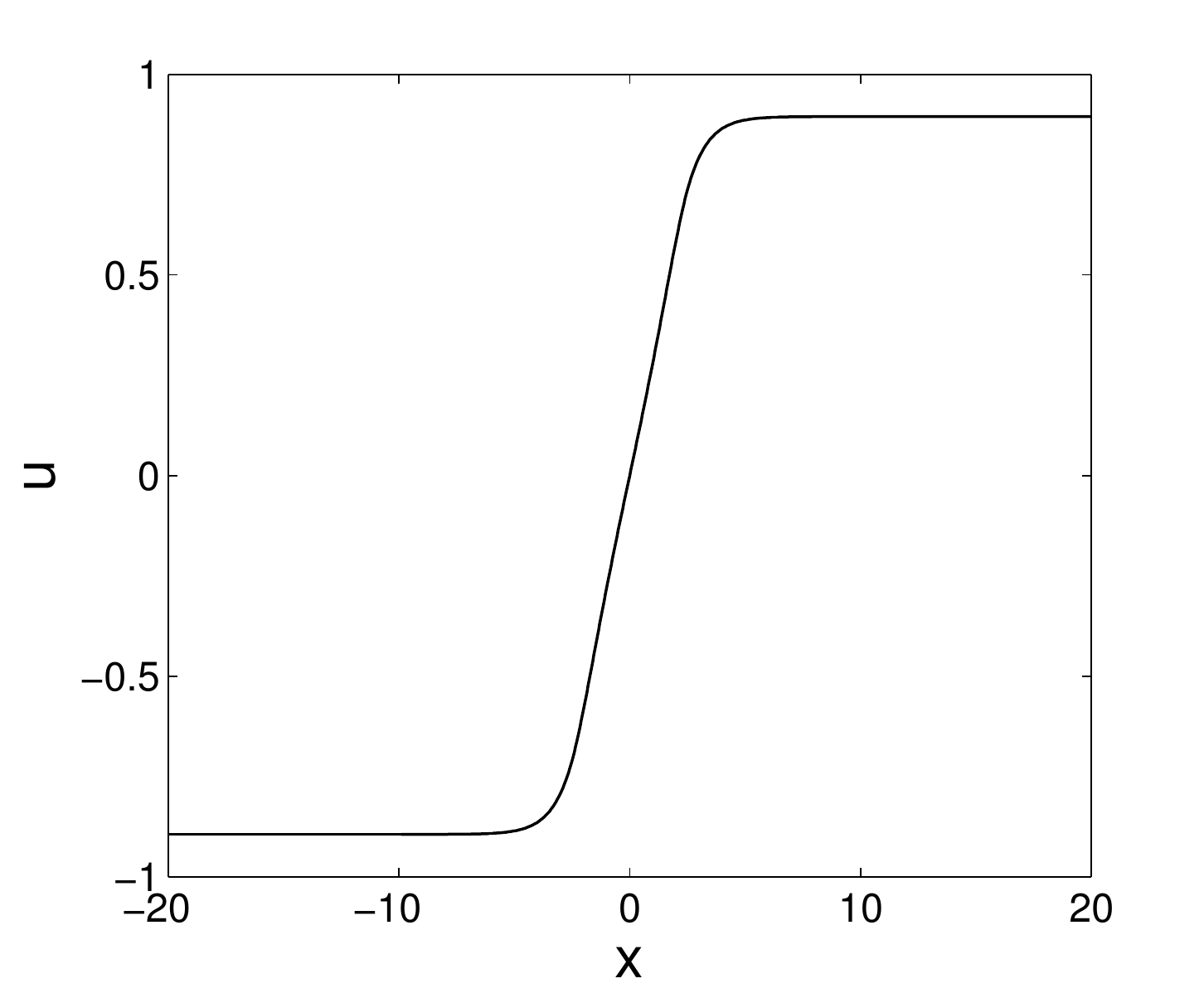}}
\subfigure[$r^2=0.86$]{\includegraphics[scale=.4]{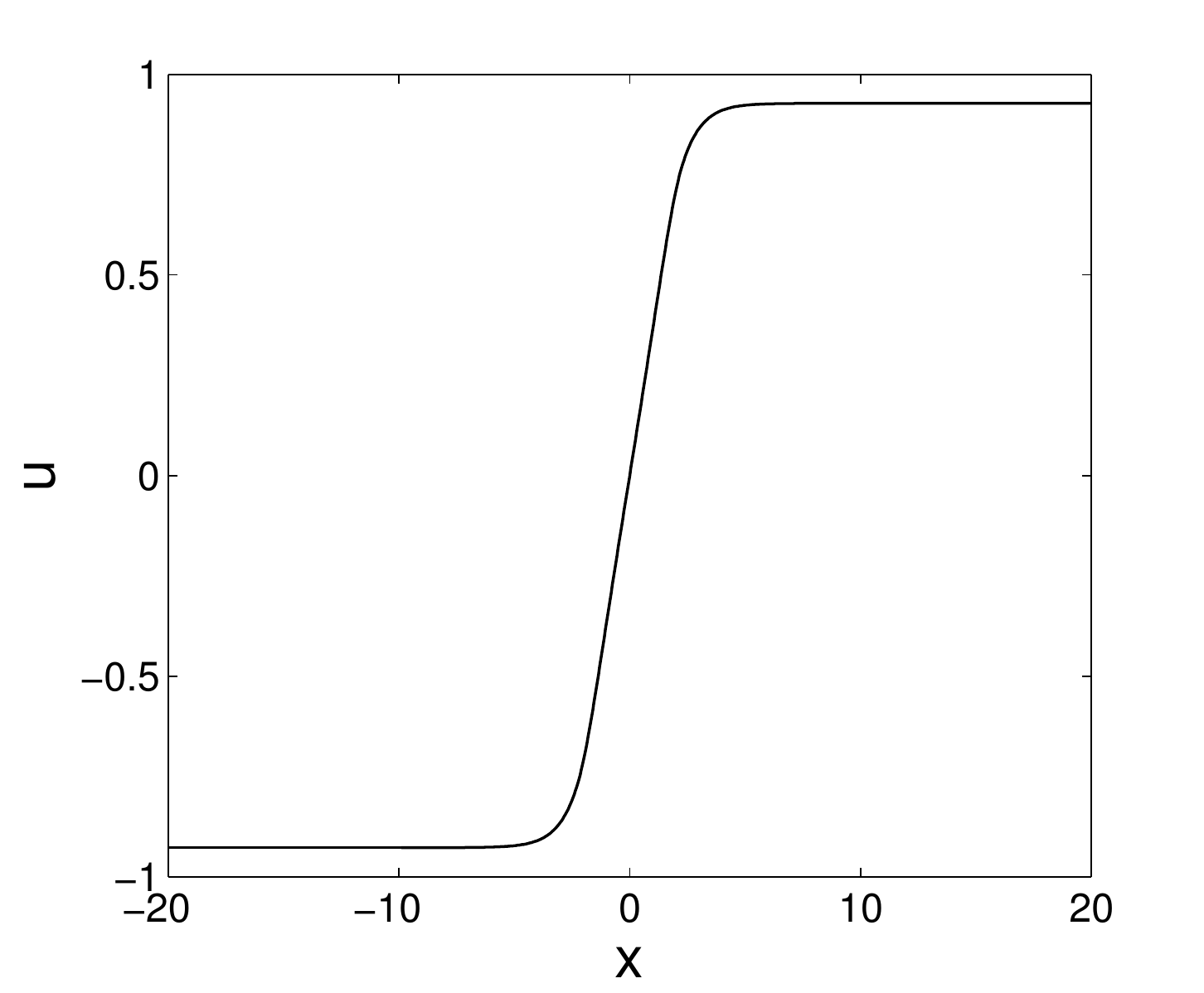}}
\subfigure[$r^2=0.80$]{\includegraphics[scale=.4]{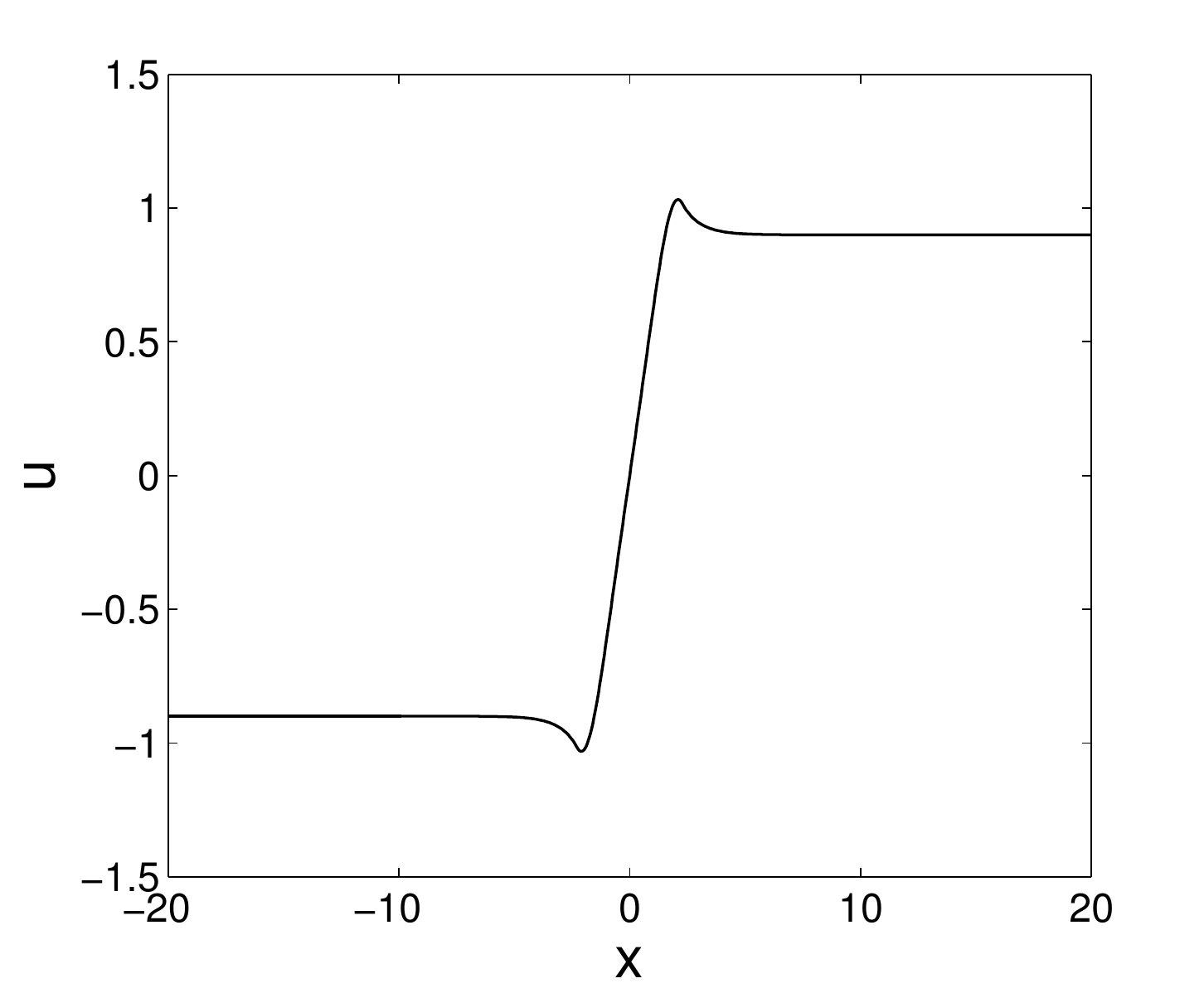}}
\subfigure[$r^2=0.60$]{\includegraphics[scale=.4]{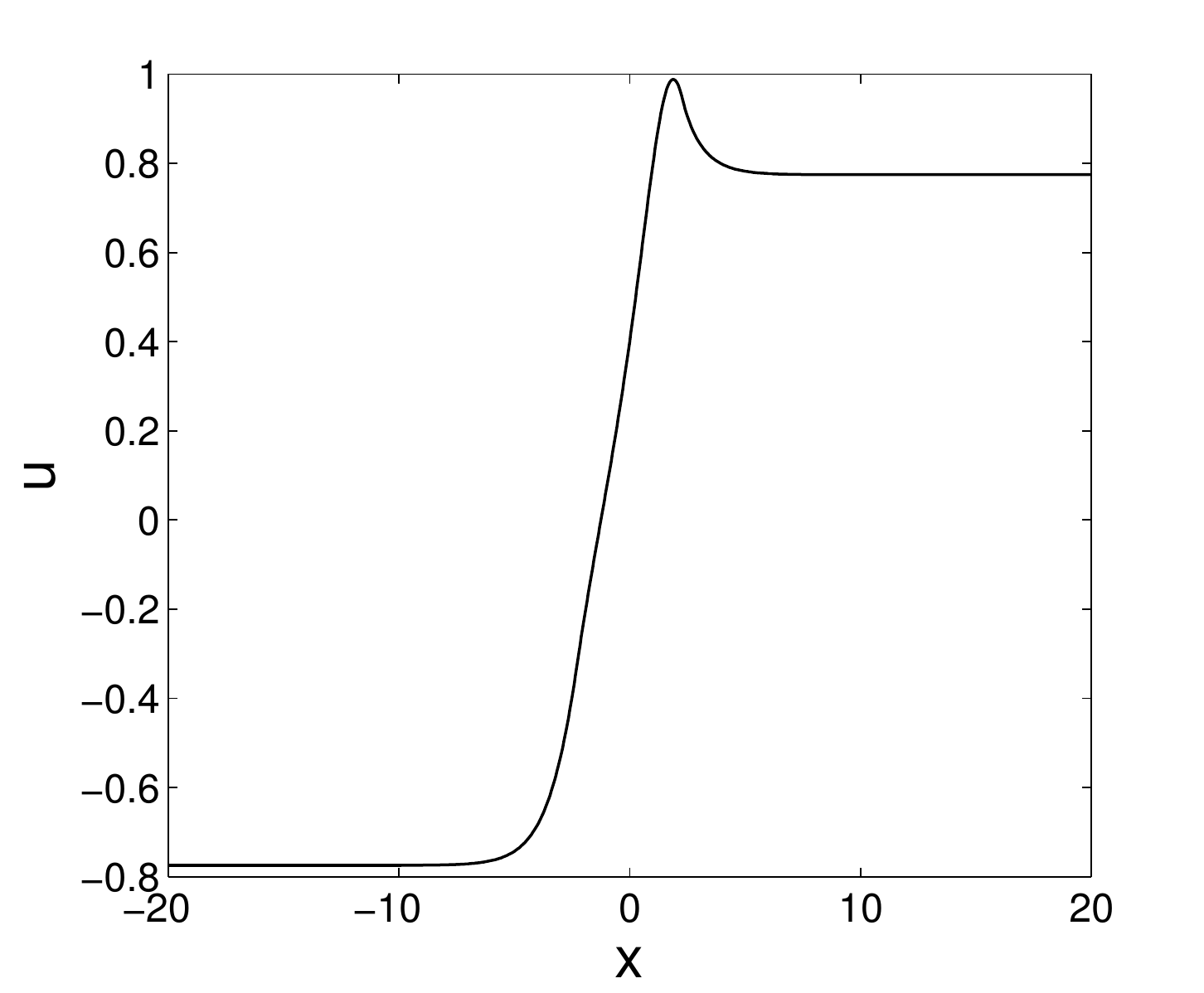}}
\subfigure[$N$ vs.\ $r^2$]{\includegraphics[scale=.6]{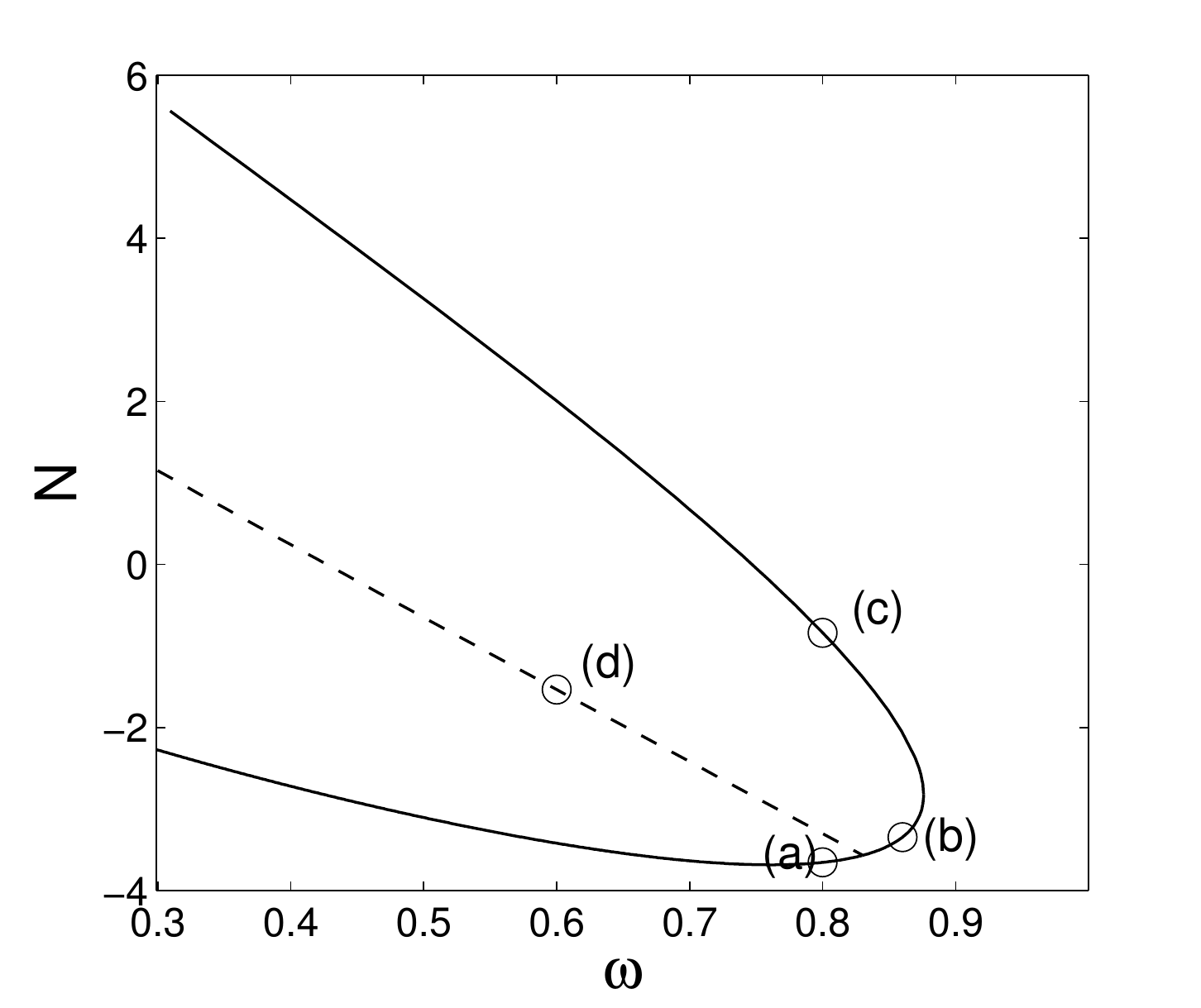}}
\end{center}
\caption{(a-d) Four black solitons with the constant $r^2$ mentioned in each caption. In addition to the symmetric states, there is also an asymmetric state. (e) The norm $N$ as a function of $r^2$ for the first excited states. The computation of $P-Q$ is also indicated near the curves. }
\label{fig:truedark}
\end{figure}

In addition to positive solutions, 
we also consider standing waves on a nonzero background that have a zero in the profile, \textit{i.e.}, black solitons.  Such solutions can be identified using the same method above.  In Fig.\ \ref{fig:truedark}, we present four examples of such solutions with the corresponding value of $r^2$ shown in the caption of each panel and their bifurcation diagram. As was the case with the positive solutions, we can identify two symmetric solutions, as well as a pair of asymmetric solutions. 
In this case, however, the asymmetric solutions bifurcate from the lower symmetric branch.

After obtaining the general picture of the existence of static solutions, next we study the stability of them. Note that for black solitons, $Q=1$. Counting the number of positive eigenvalues of $D_+$, we obtain that in Figure \ref{fig:truedark} along the lower solid branch, to the left of the bifurcation point of the asymmetric states, the symmetric black solitons have $P=0$. To the right of the pitchfork bifurcation, until the saddle-node, the operator $D_{+}$ has one positive eigenvalue, i.e.\ $P=1$. Turning up onto the upper branch, another positive eigenvalue of $D_{+}$ appears and $P=2$. As was the case with the positive states, the asymmetric state has $P=1$ in its existence region. Because $Q=1$, using Lemma \ref{th:main}, one can conclude that symmetric solutions along the upper branch and those along the lower branch to the left of the bifurcation point of asymmetric states are all unstable. As for the asymmetric state, as well as the symmetric ones on the part of the solid curve containing point (b) in Fig.\ \ref{fig:truedark}(e), {because $P-Q=0$ in these cases, and so the hypothesis of Lemma \ref{th:main} is not met,} we cannot conclude instability directly from the lemma. In this case, numerical calculations are necessary.

\begin{figure}[tbhp!]
\begin{center}
\subfigure[]{\includegraphics[scale=.4]{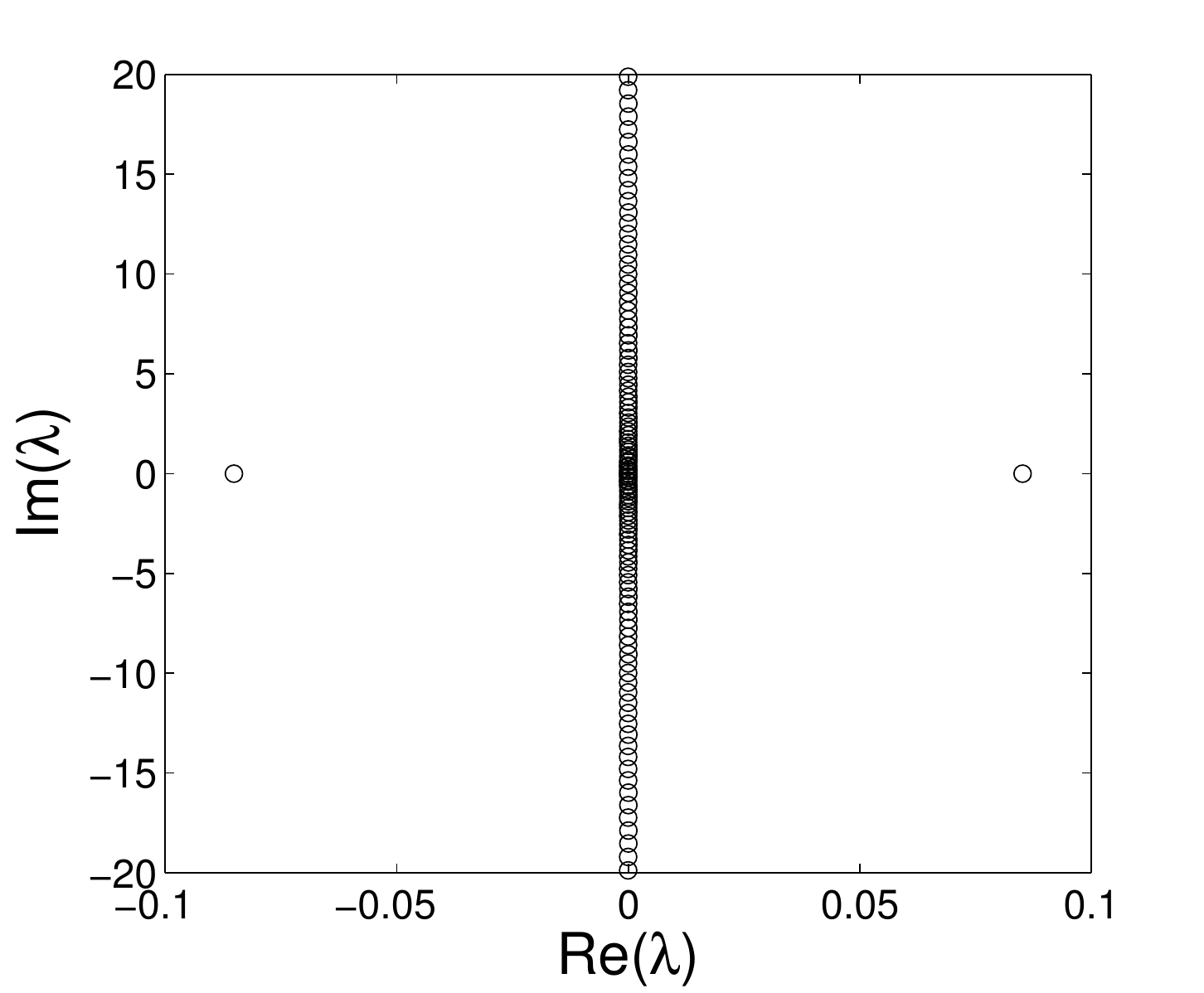}}
\subfigure[]{\includegraphics[scale=.4]{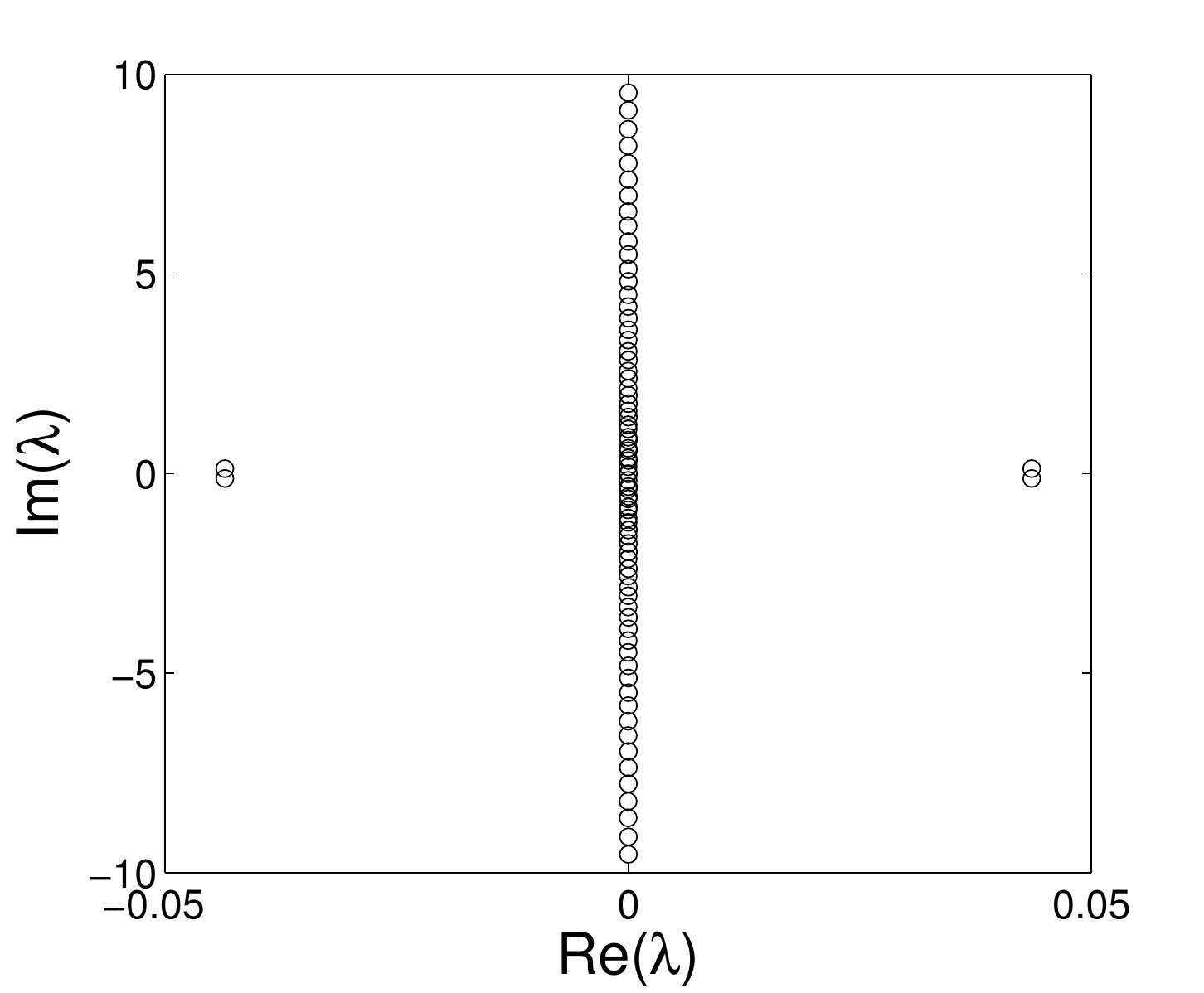}}
\subfigure[]{\includegraphics[scale=.4]{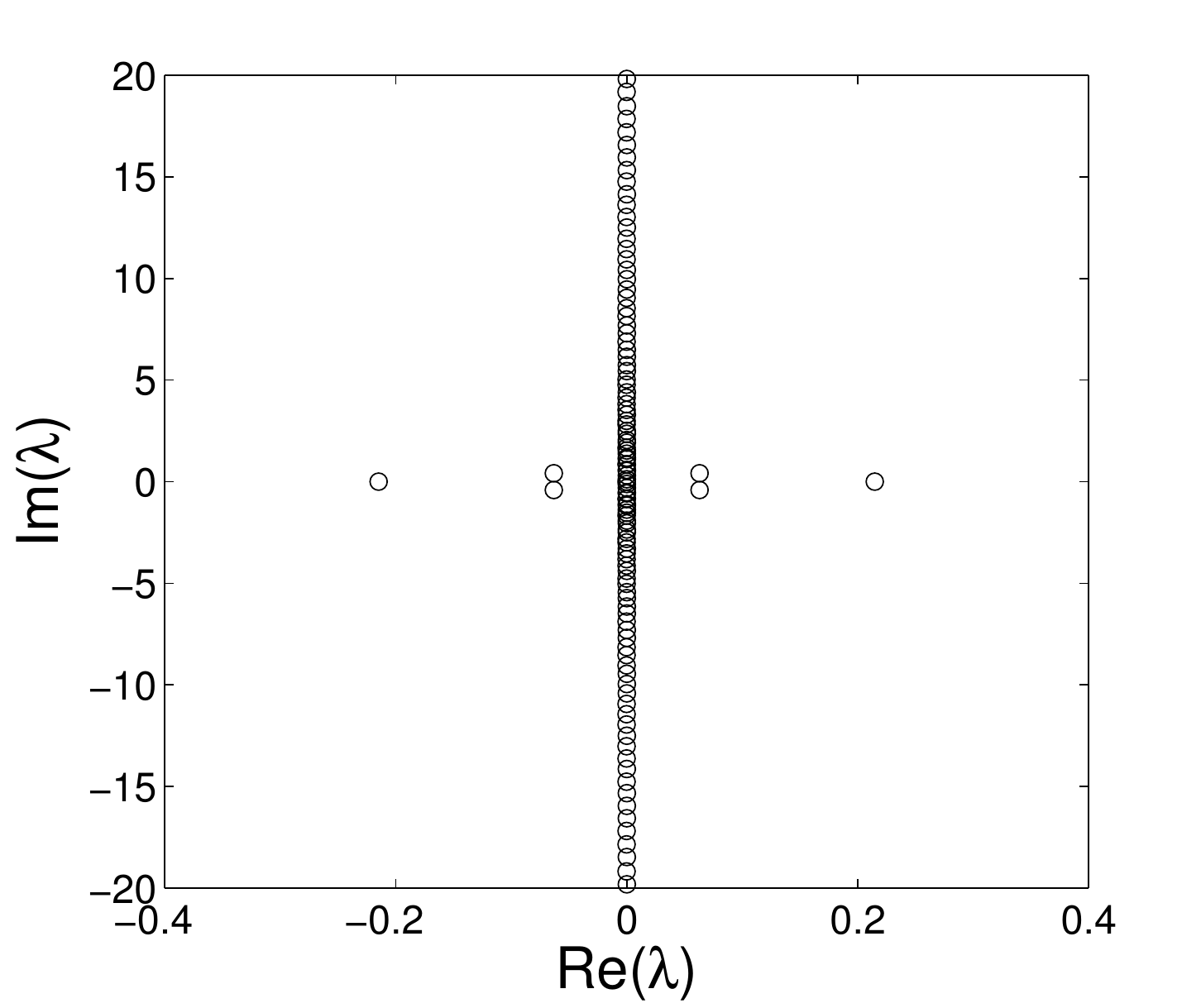}}
\subfigure[]{\includegraphics[scale=.4]{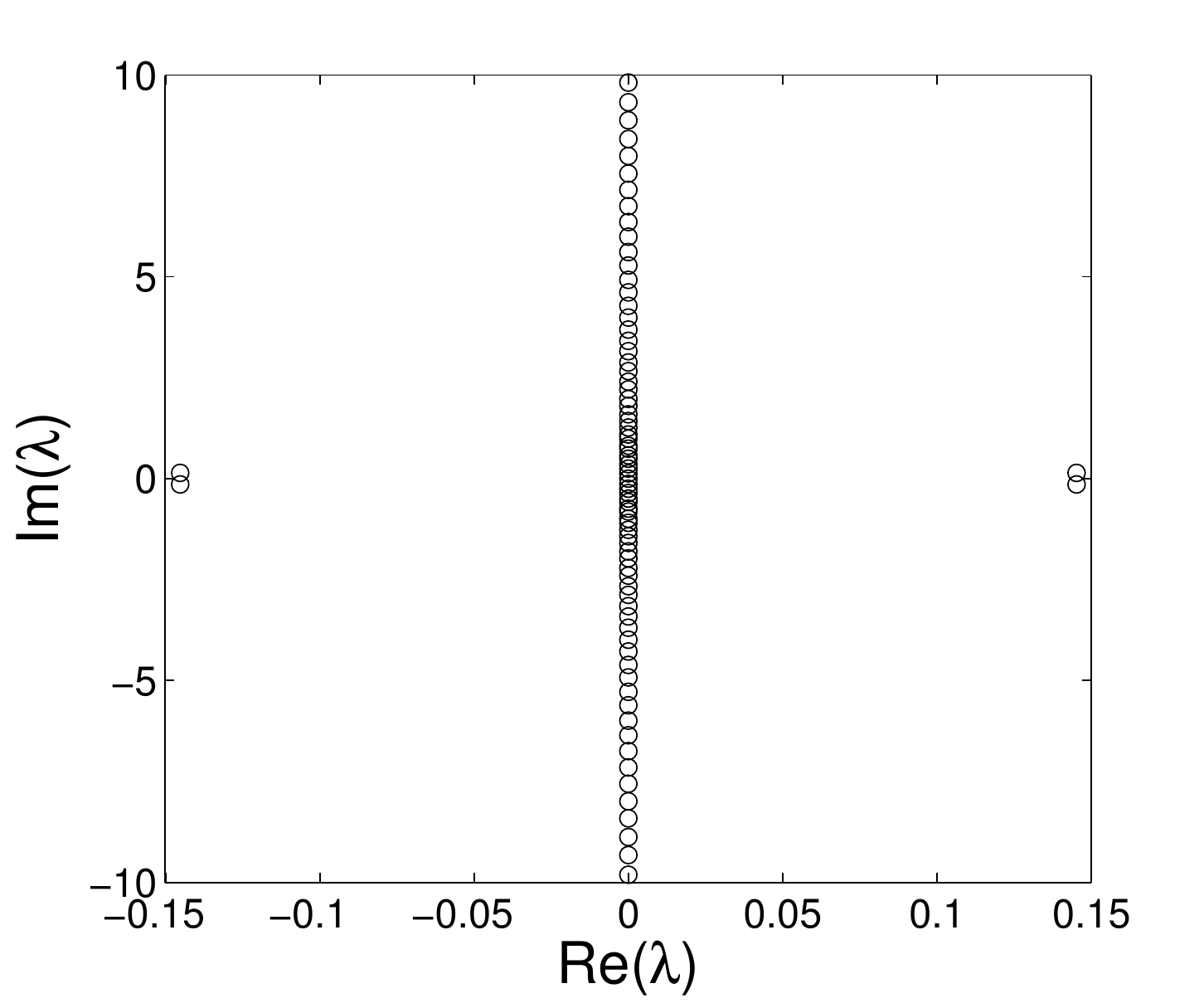}}
\end{center}
\caption{The structures of the spectra of the solutions in Fig.\ \ref{fig:truedark} in the complex plane.}
\label{fig:stab2}
\end{figure}

For the solutions in Fig.\ \ref{fig:truedark}(a)--(d), the numerically computed eigenvalues of the corresponding operator $M$ in the complex plane are shown in Fig.\ \ref{fig:stab2}. Here, one can observe that indeed solutions (a) and (c) are unstable. For both solutions, there is $(|P-Q|=1)-$pair of real eigenvalues, in agreement with Lemma \ref{th:main}. It is interesting to note for solution (c), in addition to the pair of real eigenvalues, there are two pairs of unstable eigenvalues with nonzero imaginary parts. As for solutions (b) and (d), where $P-Q=0$, we found that they are unstable due to the presence of complex unstable eigenvalues, i.e.\ the solitons experience an oscillatory instability. Hence, as opposed to a stability as in the case of positive solutions, $P-Q=0$ corresponds to a quartet of complex eigenvalues here.

By varying the constant $r^2$, we have calculated the stability of the corresponding solutions of all the branches in Fig.\ \ref{fig:truedark}(e), where we found that there is no stable black solitons. Starting from point (a) in the bifurcation diagram, as $r^2$ increases the only positive eigenvalue decreases and disappears in the imaginary axis at the pitchfork bifurcation point. Beginning from an asymmetric solution at point (d), the two pairs of unstable complex eigenvalues also decrease as $r^2$ increases. Increasing $r^2$ further, two pairs of complex eigenvalues immediately appears. The symmetric solutions pick up a pair of real eigenvalues at the saddle-node bifurcation.

The typical time evolution of unstable black solutions is similar to that in the right panel of Fig.\ \ref{fig:evol1}. 

\section{Summary}

We have established analytically an instability criterion for solitary waves on nonzero backgrounds, i.e.\ nontrivial equilibria in the phase-space, through a topological argument. We have applied the method to a nonlinear Schr\"odinger equation with a non-uniform nonlinearity coefficient. In particular, we have investigated a Schr\"odinger equation with self-focusing nonlinearity bounded by self-defocusing one, which admits solutions on nonzero backgrounds. 
Even though we considered a particular system set-up, the novel criterion can be applied generally.

The result presented in this paper is novel and can be effectively used to identify localized solutions possessing real eigenvalues. The method has been successfully applied to predict the instability of both symmetric and asymmetric positive solutions and black solitons. Numerical computations have been used to confirm the strength of the criterion. Through numerics, it also has been shown that the technique cannot definitively establish instability caused solely by complex unstable eigenvalues, such as in asymmetric black solitons. 

Even though here we consider the case of the same background as the spatial variable $x\to\pm\infty$, the method can be applied generally to study systems composed of 
dissimilar defocusing media \cite{skin91,ande91,mira92}. In the papers, Andersen and Skinner \cite{skin91,ande91,mira92} consider the existence and stability of dark surface wave sitting at the interface between two defocusing Kerr media with different linear and nonlinear refractive indices. Depending on the combination of the coefficients, one can obtain solitary waves that decay to a non-zero constant at one side while vanish at the other side, or decay to different non-zero constants at both sides. In those cases, Lemma \ref{th:main} is still valid.



\begin{thebibliography}{}
%
%


\bibitem{abdu12} F. Kh. Abdullaev, R. M. Galimzyanov, and Kh.N. Ismatullaev, {\it Quasi 1D Bose-Einstein condensate flow past a nonlinear barrier}, Phys.\ Lett.\ A 376, 3372?3376 (2012).


\bibitem{abdu05} F.Kh. Abdullaev, A. Gammal, A.M. Kamchatnov, and L. Tomio, \emph{Dynamics of bright matter wave solitons in a Bose-Einstein condensate}, Int. J. Mod. Phys. B 19, 3415 (2005).

\bibitem{ao05} S.-M.\ Ao and J.-R.\ Yan, \emph{A perturbation method for dark solitons based on a complete set of the squared Jost solutions}, J. Phys. A: Math. Gen. 38, 2399 (2005).

\bibitem{ande91} D.R. Andersen and S.R. Skinner, \emph{Stability analysis of the fundamental dark surface wave}, J.\ Opt.\ Soc.\ Am.\ B 8, 2265-2268 (1991).

\bibitem{bara96} I.V.\ Barashenkov, \emph{Stability criterion for dark solitons}, Phys. Rev. Lett. 77 (1996), 1193-1197.

\bibitem{burg99} S. Burger, K. Bongs, S. Dettmer, W. Ertmer, and K. Sengstock, A. Sanpera, G. V. Shlyapnikov, and M. Lewenstein, {\it Dark Solitons in Bose-Einstein Condensates}, Phys. Rev. Lett. 83, 5198-5201 (1999).

\bibitem{caru02} I.\ Carusotto, D.\ Embriaco and G.C.\ La Rocca, \textit{Nonlinear atom optics and bright-gap-soliton generation in finite optical lattices}, Phys. Rev. A 65, 053611 (2002).



\bibitem{chen98} X. J. Chen, Z. D. Chen, and N. N. Huang, \emph{A direct perturbation theory for dark solitons
based on a complete set of the squared Jost solutions}, J. Phys. A: Math. Gen. 31 (1998),
6929--6947.



\bibitem{chen92} Y.\ Chen and H.T.\ Tran, \emph{Gray and dark spatial solitary waves in slab waveguides}, Opt.\ Lett.\ 17, 580--582 (1992).


\bibitem{dens00} J. Denschlag et al., {\it Generating Solitons by Phase Engineering of a Bose-Einstein Condensate}, Science 287, 97-101 (2000).



\bibitem{fran10} D. J. Frantzeskakis, \emph{Dark solitons in atomic Bose-Einstein condensates: from theory to experiments}, J. Phys. A: Math. Theor. 43, 213001 (2010).

\bibitem{haki97} V. Hakim, {\it Nonlinear Schr\"odinger flow past an obstacle in one dimension}, Phys. Rev. E 55, 2835 (1997).

\bibitem{huan99} N. N. Huang, S. Chi, and X. J. Chen, \emph{Foundations of direct perturbation method for dark
solitons}, J. Phys. A: Math. Gen. 32 (1999), 3939-3945.

\bibitem{jone88} C.K.R.T.\ Jones, \emph{Instability of standing waves for non-linear Schr\"odinger-type equations}, Ergodic Theory and Dynamical Systems \textbf{8*} (1988), 119--138.

\bibitem{kapi00} T. Kapitula and J. Rubin, \emph{Existence and stability of standing hole solutions to complex
Ginzburg-Landau equations}, Nonlinearity 13 (2000), 77-112.

\bibitem{kart11} Y.V.\ Kartashov, B.A.\ Malomed, and L.\ Torner, \textit{Solitons in nonlinear lattices}, Rev. Mod. Phys. 83, 247?305 (2011).



\bibitem{kevr04} P.G. Kevrekidis and D.J. Frantzeskakis, \emph{Pattern forming dynamical instabilities of Bose-Einstein condensates}, Mod. Phys. Lett. B 18, 173 (2004).

\bibitem{kevr07} P.G.\ Kevrekidis, D.J.\ Franzeskakis, and R.\ Carretero-Gonz\'alez (Eds. ), {\it Emergent Nonlinear Phenomena in Bose-Einstein Condensates} (Springer-Verlag, New York, 2007).

\bibitem{kivs95} Yu.S.\ Kivshar and W.\ Kr\'olikowski, \emph{Instabilities of dark solitons}, Optics Letters 20, 1527--1529 (1995).

\bibitem{kivs95_b} Y.S.\ Kivshar and W.\ Krolikowski, \emph{Lagrangian approach for dark solitons}, Opt.\ Comm.\ 114, 353--362 (1995).

\bibitem{kivs98} Yu.S.\ Kivshar and B.\ Luther-Davies, \emph{Dark optical solitons: physics and applications}, Phys. Rep. 298, 81-197 (1998).

\bibitem{kivs94} Yu.S.\ Kivshar and X.\ Yang, \emph{Perturbation-induced dynamics of dark solitons}, Phys. Rev. E 49 (1994), 1657-1670.

\bibitem{kohl06} T.\ K\"ohler, K.\ G\'oral, and P.S.\ Julienne, \textit{Production of cold molecules via magnetically tunable Feshbach resonances}, Rev. Mod. Phys. 78, 1311?1361 (2006).

\bibitem{kolo73} A. A. Kolokolov, Lett. Nuovo Cimento Soc. Ital. Fis. 8, 197 (1973).


\bibitem{kono94} V. V. Konotop and V. E. Vekslerchik, \emph{Direct perturbation theory for dark solitons}, Phys.
Rev. E 49 (1994), 2397-2407.


\bibitem{lega97} J. Lega and S. Fauve, \emph{Traveling hole solutions to the complex Ginzburg-Landau equation as perturbations of nonlinear Schr\"odinger dark solitons}, Physica D 102 (1997), 234-252.

\bibitem{lin02} Z. Lin, \emph{Stability and instability of traveling solitonic bubbles}, Adv. Diff. Eq., 7 (2002),
897-918.

\bibitem{rmckrtjhs10}
R.~Marangell, C.~K. R.~T. Jones, and H.~Susanto, \emph{Localized standing waves
  in inhomogeneous Schr\"{o}dinger equations}, Nonlinearity \textbf{23}, 2059 (2010).

  \bibitem{lash04} V.M.\ Lashkin, {\it Perturbation theory for dark solitons: Inverse scattering transform approach and radiative effects}, Phys. Rev. E 70, 066620 (2004).

  \bibitem{mara11} R.~Marangell, H.~Susanto, and C.~K. R.~T. Jones,  \emph{Unstable gap solitons in inhomogeneous nonlinear Schr\"odinger equations}, submitted (2011).

\bibitem{mari03} M.\ Maris, {\it Stationary solutions to a nonlinear Schr\"odinger equation with potential in one dimension}, Proc. Roy. Soc. Edinburgh A 133, 409-437 (2003).


\bibitem{menz07} L.\ Di Menza and C.\ Gallo, \emph{The black solitons of one-dimensional NLS equations}, Nonlinearity
20 (2007), 461-496.

\bibitem{mira92} J. Miranda, D. R. Andersen, and S. R. Skinner, \emph{Stability analysis of stationary nonlinear guided waves in self-focusing and self-defocusing Kerr-like layered media}, Phys. Rev. A 46, 5999-6001 (1992).

\bibitem{mora01} R. Morandotti, H. S. Eisenberg, Y. Silberberg, M. Sorel, and J. S. Aitchison, {\it Self-Focusing and Defocusing in Waveguide Arrays}, Phys. Rev. Lett. 86, 3296-3299 (2001).

\bibitem{peli08} D. E. Pelinovsky and P. G. Kevrekidis, \emph{Dark solitons in external potentials}, ZAMP 59, 559 (2008).

\bibitem{peli96} D.E.\ Pelinovsky, Yu.S.\ Kivshar, and V.V.\ Afanasjev, \emph{Instability-induced dynamics of dark solitons}, Phys. Rev. E 54 (1996), 2015-2032.

\bibitem{skin91} S.R.\ Skinner and D.R.\ Andersen, \emph{Stationary fundamental dark surface waves}, J.\ Opt.\ Soc.\ Am.\ B 8, 759-764 (1991).

\bibitem{swar91} G. A. Swartzlander, Jr., D. R. Andersen, J. J. Regan, H. Yin, and A. E. Kaplan, {\it Spatial dark-soliton stripes and grids in self-defocusing materials}, Phys. Rev. Lett. 66, 1583-1586 (1991).


\bibitem{tran94} H.T.\ Tran, \emph{Stability of stationary dark waves guided by nonlinear surfaces and waveguides}, J.\ Opt.\ Soc.\ Am.\ B 11, 789--797 (1994).

\bibitem{vakh73} M.G. Vakhitov and A.A. Kolokolov, {\it Stationary solutions of the wave equation in the medium with nonlinearity saturation}, Izv. Vyssh. Uch. Zav. radiofizika 16, 1020 (1973) [Radiophys. Quantum Electron 16, 783 (1973)]. 

\bibitem{yama10} R.\ Yamazaki, S.\ Taie, S.\ Sugawa, and Y.\ Takahashi, \textit{Submicron Spatial Modulation of an Interatomic Interaction in a Bose-Einstein Condensate}, Phys. Rev. Lett. 105, 050405 (2010).

\bibitem{zakh73} V.E.\ Zakharov and A.B.\ Shabat, Zh. Eksp. Teor. Fiz. 64, 1627 (1973) [Sov. Phys. JETP 37, 823 (1973)].

\bibitem{zhid92} P.E.\ Zhidkov, {\it Existence of solutions to the Cauchy problem and stability of kink-solutions of the nonlinear Schr\"odinger equation}, Siberian. Math. J. 33, 239-246 (1992).

\end{thebibliography}
\bibliographystyle{amsalpha}


\end{document}